\documentclass[a4paper,UKenglish]{llncs}
\usepackage{amsmath}
\usepackage{amssymb}
\usepackage{pgf}
\usepackage{marvosym}
\usepackage{tikz}
\usetikzlibrary{arrows,automata}
\usepackage[T1]{fontenc}
\usepackage[shortlabels]{enumitem}

\title{$\Sigma^{\mu}_2$ is decidable for $\Pi^{\mu}_2$}
\subtitle{(Pre-print, extended version)}

\author{Karoliina Lehtinen\inst{1}\inst{3}\textsuperscript{\Letter} \and Sandra Quickert\inst{2}\inst{3}}
\institute{University of Kiel, Germany \\
\email{kleh@informatik.uni-kiel.de}
\and University of St. Andrews, UK \\
\email{sq21@st-andrews.ac.uk}
\and University of Edinburgh, UK}

\begin{document}

\maketitle

\begin{abstract}
Given a $\Pi^{\mu}_2$ formula of the modal $\mu$ calculus, it is decidable whether it is equivalent to a $\Sigma^{\mu}_2$ formula.
 \end{abstract}

 \section{Introduction}
 
 The modal $\mu$ calculus, $L_\mu$, is a well-established verification logic describing properties of labelled transition systems.
 It consists of a simple modal logic, augmented with the least fixpoint $\mu$ and its dual, the greatest fixpoint $\nu$.
 Alternations between $\mu$ and $\nu$ are key for measuring complexity: the fewer alternations, the easier a formula
 is to model check. We call this the formula's index. For any fixed index, the model-checking problem is in $P$. However,
 no fixed index is sufficient to capture all properties expressible in $L_\mu$ \cite{bradfield1996modal,lenzi1996hierarchy,arnold1999}, and it is notoriously
 difficult to decide whether a formula can be simplified. So far only properties expressible without fixpoints \cite{otto1999eliminating}, or with only one type of fixpoint \cite{kusters2002deciding} are known to be decidable.
 
 In automata theory, the corresponding index problem is to decide the simplest acceptance condition sufficient to express a property with a
 specified type of automata. This is often referred to as the Mostowski-Rabin index of a language.
 
Given a deterministic automaton on labelled binary trees,
 the minimal index of equivalent deterministic \cite{niwinski1998relating}, non-deterministic \cite{urbanski2000deciding,niwinski2005deciding},
 and alternating \cite{niwinski2003gap} automata are all known to be decidable.
In \cite{facchini2013rabin} these results were extended to show that the non-deterministic and alternating index problems are also decidable for languages 
of labelled binary trees recognised by game-automata, a slightly more general model than deterministic automata.

 For the case of non-deterministic automata, the index problem reduces to the uniform universality of distance parity automata \cite{colcombet2008non}.
In \cite{colcombet2013deciding} it was shown that given a B\"uchi definable language $\mathcal L$, it is decidable whether it can be described by an alternating co-B\"uchi automaton. 
 Skrzypczak and Walukiewicz \cite{skrzypczakdeciding} give an alternative proof of the same result and add a topological 
 characterisation of the recognised languages. 
 A B\"uchi definable language which is co-B\"uchi is said to be weakly definable: it is definable in weak monadic second order logic \cite{rabin1970}, and equivalently, by an alternating automaton which is simultaneously
both B\"uchi and co-B\"uchi.
In $L_\mu$ terms, this result corresponds to deciding whether
 a formula in the class $\Pi^{\mu}_2$ is equivalent on binary trees to a formula in the class $\Sigma^{\mu}_2$.

 This paper provides a novel proof of the same result extended to arbitrary structures: given a $\Pi^{\mu}_2$ formula, it is decidable whether it is equivalent to a $\Sigma^{\mu}_2$ formula.
 The proof defines an $n$-parametrised game such that the decidability of $\Sigma^{\mu}_2$ reduces to deciding whether for some $n$ this is the model-checking game for a formula. 
 From this game, we derive a family $\Psi^n$ of $\Sigma^{\mu}_2$ formulas, such that an input formula $\Psi$ is equivalent to a $\Sigma^{\mu}_2$ formula if and only if it is
 equivalent to some formula in this family. To decide the parameter $n$ for $\Pi^{\mu}_2$ input formulas, we simply argue that the game construction in
 \cite{skrzypczakdeciding}, designed for binary trees, extends to the case of labelled transition systems.
 
 We consider the most interesting contributions of this paper to be the reduction of the decidability of $\Sigma^{\mu}_2$ to finding the parameter $n$ such that $\Psi$ is equivalent to $\Psi^n$.
 With this result, finding a way to generalise the game construction from \cite{skrzypczakdeciding} to arbitrary inputs would suffice to decide $\Sigma^{\mu}_2$.

 \section{Preliminaries}\label{sec-preliminaries}
 
 \subsection{$L_\mu$}
 Let us fix, once and for all a finite set of actions $\mathit{Act}=\{a,b,...\}$, a countably infinite set of propositional variables $\mathit{Prop}=\{P,Q,...\}$, and
 fixpoint variables $\mathit{Var}=\{X,Y,...\}$.  A literal is either $P$ or $\neg P$ for $P \in \mathit{Prop}$.

  \begin{definition} {\em (Labelled transition tree)}
  A labelled transition tree is a structure $\mathcal{T}=(V,v_r,E,L,P)$ where $V$ is a set of states, $v_r$ is the root, the only node without predecessor, $E\subset V\times V$ is an edge relation,
  $L: E \rightarrow \mathit{Act}$ labels edges with actions and $P: V\rightarrow 2^{\mathit{Prop}}$ labels vertices with propositional variables. Furthermore, for each $v\in V$ the set of ancestors
  $\{w\in V\mid \exists  w_1,\ldots,w_k.\;wEw_1E\ldots w_kEv\}$ is finite and well-ordered with respect to the transitive closure of $E$; the set of successors $\{w\in V\mid vEw\}$ is also finite.

  We can represent repetition in an infinite tree with back edges. Note that we allow more than one successor per label.
 \end{definition}

 \begin{definition}
 (\textit{Modal} $\mu$) The syntax of
$L_{\mu}$ is given by:
\[ \phi:= P 
	\mbox{ | } X
   \mbox{ | }\neg P
   \mbox{ | }\phi\wedge\phi
   \mbox{ | }\phi\vee\phi
   \mbox{ | }\langle a \rangle \phi
   \mbox{ | }[a] \phi
   \mbox{ | }\mu X.\phi
   \mbox{ | }\nu X.\phi
   \mbox{ | }\bot
   \mbox{ | }\top
\]

 \end{definition}
The order of operator precedence is $[a], \langle a \rangle,
\wedge, \vee, \mu$ and $\nu$.

The operators $\langle a \rangle$ and $[a]$ are called {\em modalities}, and formulas $\langle a \rangle \phi$ and $[a]\phi$ are called {\em modal formulas}. If $\psi=\mu X.\phi$ or $\psi=\nu X.\phi$, we call the formula $\phi$ the binding formula of $X$ within $\psi$ and denote it by $\phi_X$. 
We say that $\phi'$ is an immediate subformula of $\phi$ if either $\phi$ is built from $\phi'$ in one step using the syntax rules above, or, in a slight abuse of notation, if $\phi=X$ and $\phi'$ its binding formula.
 Hence $\phi$ is an immediate subformula of the formulas $\phi\vee \psi$, $\langle a \rangle \phi$, $\mu X.\phi$ and also of $X$ in $\nu X.\phi$.
A formula is guarded if every fixpoint variable is in the scope of a modality within its
binding. Without loss of expressivity 
\cite{MateescuRadu2002,kupferman2000automata}, we
restrict ourselves to $L_\mu$ in guarded positive form. We will also assume throughout the paper that all fixpoint variables within a formula have distinct names.

 The semantics of $L_\mu$ are standard, see for example \cite{bradfield2007modal}.
  We now define the {\em priority assignment} and {\em index} of a formula, following Niwi\'nski's notion of alternation in \cite{niwinski86}.
 
 \begin{definition}{\em (Priority assignment, index and alternation classes)}
  A {\em priority assignment} $\Omega$ is a function assigning an integer value to each fixpoint variable in a formula such that: 
  (a) $\mu$-bound variables receive odd priorities and $\nu$-bound variables receive even priorities, and (b) if $X$ is free in $\phi_Y$, the binding formula of $Y$, then $\Omega(X)\geq \Omega(Y)$.
A formula has {\em index} $\{q,...,i\}$ where $i\in \{0,1\}$ if it has a priority assignment with co-domain $\{q,...,i\}$.

 Formulas without fixpoints form the modal fragment of $L_\mu$. Formulas with one type of fixpoint have index $\{0\}$ or $\{1\}$,
 corresponding to the alternation classes $\Pi^{\mu}_1$ and $\Sigma^{\mu}_1$, respectively. Then the class $\Pi^{\mu}_i$ and $\Sigma^{\mu}_i$ for even
 $i$ correspond to formulas with indices $\{i,...,1\}$ and $\{i-1,...,0\}$, respectively, while for odd $i$ they correspond
 formulas with indices $\{i-1,...,0\}$ and $\{i,...,1\}$, respectively.
 A formula has semantic alternation class $C$ if it is equivalent to a formula in $C$.
  \end{definition}
 
 \begin{example}
 The formula $\mu X. \nu Y. \Box Y \wedge \mu Z. \Box (X \vee Z)$ accepts the priority assignment $\Omega(X)=1,\Omega(Y)=0$ and $\Omega(Z)=1$, so it has index $\{1,0\}$ and is in the class $\Sigma^{\mu}_2$. However, it is equivalent to $\mu X. \Box X$
which holds in structures without infinite paths, and is therefore semantically in $\Sigma^{\mu}_1$.
\end{example}

In this paper we present a new proof that $\Sigma^{\mu}_2$ is decidable for formulas in $\Pi^{\mu}_2$: given an arbitrary $L_\mu$ formula $\Psi$ with index $\{2,1\}$, it is decidable whether
$\Psi$ is equivalent to a formula with index $\{1,0\}$.

 \subsection{Parity Games}
 
 The semantics of $L_\mu$ formulas (like that of alternating parity automata) can be described in terms of winning regions of parity games.
 
 \begin{definition}
  A parity game $G=(V,v_i,E,\Omega)$ consists of a set of vertices $V$ partitioned into those belonging to Even, $V_e$,
  and those belonging to Odd, $V_o$, an initial position $v_i \in V$, and a set of edges $E\in V\times V$. A priority assignment $\Omega$ assigns a priority to every vertex.
  
  At each turn, the player who owns the current position $v$ chooses a successor
  position from the successors of $v$ via $E$. A play is a potentially infinite sequence of positions starting at the initial position $v_i$. A finite play is winning for Even if the final position
  has even priority, and for Odd otherwise. An infinite play is winning for the player of the parity of the highest priority seen infinitely often.
 \end{definition}

 Parity games are known to be determined and we can restrict ourselves to positional winning strategies \cite{emersonjutla91},\cite{mostowski91}.
 It is a standard result that given a structure $\mathcal{M}$ and a formula $\Psi$, there is a model-checking parity game $\mathcal{M}\times \Psi$ such that Even wins if
 and only if $\mathcal{M}$ satisfies $\Psi$ \cite{wilke2001alternating}.
 
 \begin{definition}{\em (The model-checking game $\mathcal{M}\times \Psi$)}
  The parity game $\mathcal{M}\times \Psi$ has for states $s \times \phi$ where $s$ is a state of $\mathcal{M}$
  and $\phi$ is a subformula of $\Psi$. There is an edge from $s \times \psi$ to $s \times \phi$ if
  $\phi$ is an immediate subformula of a non-modal formula $\psi$; there is an edge from $s \times \langle a \rangle\phi$ and $s \times [a]\phi$
  to $s'\times \phi$ for $s'$ an $a$-successor of $s$.
  Positions $s\times \phi$ where $\phi$ is a disjunction or starting with an existential modality $\langle a \rangle$ belong to Even while
  those where $\phi$ is a conjunction or universal modality $[a]$ belong to Odd. Positions with a single successor are given to Even,
  although the game is deterministic at those.
  The priority assignment is inherited from the priority assignment $\Omega_{\Psi}$ on $\Psi$: a fixpoint variable $X$ receives priority $\Omega_{\Psi}(X)$ while
  other nodes receive the minimal priority in the co-domain of $\Omega_{\Psi}$.
 \end{definition}
 
 \subsection{Disjunctive Form}
 
 Disjunctive $L_\mu$ is a fragment restricting conjunctions in a
 way reminiscent of non-deterministic automata \cite{Walukiewicz2000}. Its use is key to several of our proofs.
 
 \begin{definition}
\textit{(Disjunctive formulas) }The set of disjunctive   form formulas
of  $L_{\mu}$ is the smallest set $\mathcal{F}$ satisfying:
\begin{itemize}
\item $\top$,$\bot$, fixpoint variables and finite sets (conjunctions) of literals are in $\mathcal{F}$;
\item If $\psi\in\mathcal{F}$ and $\phi\in\mathcal{F}$, then $\psi\vee\phi\in\mathcal{F}$; 
\item If for each $a$ in $\mathit{Act}$ the set $\mathcal{B}_a\subseteq\mathcal{F}$
is a finite set of formulas, and if $\mathcal{A}$ is a finite set of literals, 
then $\mathcal{A}\wedge \bigwedge_{a\in \mathit{Act}}   {\xrightarrow{a}} \mathcal{B}_a\in \mathcal{F}$
where ${\xrightarrow{a}}\mathcal{B}_a$ is short for $(\bigwedge_{\psi\in\mathcal{B}_a}\langle a \rangle \psi)\wedge [a] \bigvee_{\psi\in\mathcal{B}_a}\psi$
-- that is to say, every formula in $\mathcal{B}_a$ holds
at least one successor and at every successor at least one of
the formulas in $\mathcal{B}_a$ holds;
\item $\mu X.\psi$ and $\nu X.\psi$ are in $\mathcal{F}$ as long as $\psi\in\mathcal{F}$.
\end{itemize}
 \end{definition}
Every formula is known to be equivalent to an effectively computable
formula in  disjunctive form \cite{Walukiewicz2000}. The transformation preserves
guardedness.
 
Given an $L_\mu$ formula with unrestricted conjunctions,
 the model-checking parity game requires Even to have a strategy to verify both conjuncts. A strategy for Even will
 agree with the plays corresponding to each of Odd's choices, leading potentially to several plays on some branches.
In contrast, disjunctive form restricts conjunctions, and the only branching in Even's strategies is at a position where the formula is of the form $\mathcal{A} \wedge \bigwedge_{a\in \mathit{Act}}  {\xrightarrow{a}} \mathcal{B}_a\in \mathcal{F}$, called an {\em Odd-choice formula}.

Disjunctive form guarantees that Even can use strategies which only agree with one play per branch. For further details, see \cite{lehtinen2015deciding}.
 
 \begin{lemma}\cite{lehtinen2015deciding}\label{below}
  Given a disjunctive formula $\Psi$, for any structure $\mathcal{M}$ and strategy $\sigma$ in $\mathcal{M}\times \Psi$, there is a structure $\mathcal{M}'$ bisimilar to $\mathcal{M}$
  such that a strategy $\sigma'$ in $\mathcal{M}'\times \Psi$ induced from $\sigma$ only agrees with one play per branch. 
  We then say that $\mathcal{M}'$ and $\sigma'$ are well-behaved.
 \end{lemma}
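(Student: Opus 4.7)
The plan is to isolate the only way the obstruction to well-behavedness can arise, then eliminate it by duplicating successors. In $\mathcal{M}\times\Psi$ with $\Psi$ disjunctive, the only positions from which $\sigma$ can send two distinct plays into the same branch are Odd-choice positions $s\times(\mathcal{A}\wedge\bigwedge_a \xrightarrow{a}\mathcal{B}_a)$: disjunctive form rules out every other form of conjunction that would force Even to verify two subformulas simultaneously. A single $a$-successor $s'$ of $s$ can carry two distinct plays under $\sigma$ because the position $s'\times\psi$ is reached both when Odd picks the conjunct $\langle a\rangle\psi$ (and Even picks $s'$ as witness) and when Odd picks $[a]\bigvee\mathcal{B}_a$ landing on $s'$ (and Even picks the disjunct $\psi$); this gives two plays through $s'$ whenever $\sigma$ maps distinct $\psi_1,\psi_2\in\mathcal{B}_a$ to the same witness, or the witness chosen for $\psi$ disagrees with the disjunct chosen at that witness.

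To eliminate this, I would first tree-unravel $\mathcal{M}$ into a bisimilar tree and then, working top-down from the root, at each node $s$ whose game position is an Odd-choice with formula $\mathcal{A}\wedge\bigwedge_a \xrightarrow{a}\mathcal{B}_a$, let $f_a:\mathcal{B}_a\to\{\text{$a$-successors of }s\}$ be the witness map given by $\sigma$. In $\mathcal{M}'$, replace the $a$-successors of $s$ by a fresh family containing one copy $s_\psi$ of the subtree under $f_a(\psi)$ for each $\psi\in\mathcal{B}_a$, together with one copy of the subtree under each $a$-successor of $s$ that is not already in the image of $f_a$. The induced $\sigma'$ is then: for Even's witness choice of $\psi$ at $s$ pick $s_\psi$; for Even's disjunct choice at $s_\psi$ pick $\psi$; for Even's disjunct choice at an extra copy of some original $s'$ pick $\sigma(s',\mathcal{B}_a)$; everywhere else pull $\sigma$ back along the natural projection of $\mathcal{M}'$ onto the unraveling.

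Two properties then need verifying. Bisimilarity of $\mathcal{M}'$ with $\mathcal{M}$ holds via the projection that sends each $s_\psi$ to $f_a(\psi)$ and each extra copy to its source: every copy is bisimilar to its source, and every $a$-successor of $s$ in the unraveling is matched either by some $s_\psi$ or by its extra copy. One-play-per-branch is shown by induction on depth in $\mathcal{M}'$: the root has a single play; at each duplicated successor $s_\psi$ both the ``witness for $\psi$'' route and the ``Odd goes to $s_\psi$'' route converge to the single position $s_\psi\times\psi$; at an extra copy of $s'$ only the ``Odd goes to $s'$'' route enters, with Even's fixed disjunct choice. The main delicate step is making the top-down construction rigorous over a potentially infinite unraveling with back-edges; I would handle this either as a colimit of finite approximations or by describing $\mathcal{M}'$ directly as the tree whose nodes are pairs consisting of a state of the unraveling together with a history of Odd-choice decisions leading to it, which makes the local duplication step globally coherent.
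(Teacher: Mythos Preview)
The paper does not supply a proof of this lemma: it is stated with a citation to \cite{lehtinen2015deciding} and used as a black box thereafter, so there is no in-paper argument to compare against.

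Your construction is nonetheless the standard one and is correct. You correctly isolate the Odd-choice formula $\mathcal{A}\wedge\bigwedge_a\xrightarrow{a}\mathcal{B}_a$ as the unique place where two $\sigma$-consistent plays can enter the same branch, and your remedy---unravel to a tree, then at each such position give every $\psi\in\mathcal{B}_a$ its own private copy $s_\psi$ of the witness $f_a(\psi)$ while forcing Even's $[a]$-disjunct choice at $s_\psi$ to be $\psi$---is exactly the construction used in the Janin--Walukiewicz tradition and in the cited source. One point to make explicit when writing it up: at $s_\psi$ you are overriding the $[a]$-disjunct choice that $\sigma$ would have made at $f_a(\psi)$, so ``induced from $\sigma$'' must be read as ``every continuation of $\sigma'$ from a reachable position projects to a continuation that $\sigma$ itself takes''; this is sound precisely because $\sigma$ already reaches $f_a(\psi)\times\psi$ via the $\langle a\rangle\psi$ conjunct, so the tail from $s_\psi\times\psi$ is genuinely pulled back from $\sigma$. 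The global-coherence worry you flag is real but routine; your second suggestion---defining $\mathcal{M}'$ directly as the tree whose nodes pair a state of the unraveling with the history of Odd-choice resolutions---is the clean way to do it and makes the top-down induction immediate.
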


\begin{lemma}\label{disjunctpitwo}
 Given a $\Pi^{\mu}_2$ formula, the transformation into disjunctive form as presented in \cite{Walukiewicz2000} yields a disjunctive $\Pi^{\mu}_2$
 formula.
\end{lemma}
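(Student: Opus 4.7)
The plan is to proceed by induction on the structure of the input $\Pi^{\mu}_2$ formula $\Phi$, tracing through each clause of Walukiewicz's transformation. The base cases ($\top$, $\bot$, literals and fixpoint variables) are immediate because these are already in disjunctive form and carry no priorities beyond the (trivial) priority of a single fixpoint variable. Disjunctions commute with the transformation and preserve the priority assignment. Fixpoint binders $\mu X.\phi$ and $\nu X.\phi$ apply to the inductively obtained disjunctive form of $\phi$, leaving the priority of $X$ unchanged. Modal prefixes $\langle a\rangle\phi$ and $[a]\phi$ are not independent syntactic constructs in disjunctive form; they are produced by the conjunction clause when it repackages subformulas into the Odd-choice connective ${\xrightarrow{a}}\mathcal{B}_a$, which introduces no fresh priorities of its own.

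The only non-trivial case is conjunction $\phi_1\wedge\phi_2$. Here Walukiewicz's transformation performs a product-like construction on the two inductively obtained disjunctive formulas and introduces fresh fixpoint variables to manage the interaction between the two fixpoint structures. The key observation I would exploit is semantic: a disjunctive $\Pi^{\mu}_2$ formula is the formula-level counterpart of a non-deterministic co-B\"uchi tree automaton, and the conjunction of two such formulas corresponds to the conjunction of two co-B\"uchi conditions, which is itself co-B\"uchi. At the automaton level this is the dualised Miyano--Hayashi construction, which turns an alternating co-B\"uchi automaton into a non-deterministic co-B\"uchi automaton without raising the index. Consequently, the fresh fixpoint variables introduced by the product step can be assigned priorities only in $\{1,2\}$: the $\nu$-variables tracking that no pending obligation remains unmet receive priority $2$, and the $\mu$-variables tracking the resolution of such obligations receive priority $1$.

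The main obstacle I expect is purely syntactic book-keeping. Once one accepts the semantic argument that co-B\"uchi is closed under conjunction, the work left is to inspect the conjunction clause of the transformation in \cite{Walukiewicz2000}, unwind it on disjunctive $\Pi^{\mu}_2$ inputs, and verify two things: first, that every fresh fixpoint variable it introduces can be consistently assigned a priority in $\{1,2\}$, and second, that the dependency condition $\Omega(X)\ge\Omega(Y)$ whenever $X$ is free in $\phi_Y$ is preserved on the combined assignment obtained from the two inputs together with the fresh variables. This is essentially an exercise in priority arithmetic along the quotient of the product construction, guided by the fact (established by the semantic argument) that no priority higher than $2$ is ever required.
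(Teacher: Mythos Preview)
Your proposal rests on a structural misconception about the Walukiewicz transformation: it is \emph{not} defined by recursion on the syntax of the input formula. The procedure in \cite{Walukiewicz2000} builds a single tableau for the entire formula, then reads a disjunctive formula off that tableau. There is no ``conjunction clause'' to inspect, no product step that can be analysed in isolation, and hence no inductive hypothesis to invoke. Your plan to trace the transformation through $\vee$, $\wedge$, $\mu$, $\nu$ case by case simply does not apply to this construction.

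There is also a Büchi/co-Büchi confusion. In the paper's indexing, $\Pi^{\mu}_2$ is index $\{2,1\}$, the \emph{Büchi} side; $\Sigma^{\mu}_2$ is $\{1,0\}$, the co-Büchi side. The relevant closure fact is that alternating Büchi is equivalent to non-deterministic Büchi (Miyano--Hayashi, undualised), which is exactly why disjunctive form preserves $\Pi^{\mu}_2$ while, as the paper notes just after the lemma, it can blow up $\Sigma^{\mu}_2$ to arbitrary index. Even with this corrected, a semantic closure statement about automata classes does not, by itself, tell you that the \emph{specific} tableau-based transformation stays inside $\Pi^{\mu}_2$; that still requires reasoning about the tableau.

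The paper's argument is global and by contrapositive: it assumes the output disjunctive formula is \emph{not} in $\Pi^{\mu}_2$, so its tableau contains an odd-dominated cycle enclosing an even-dominated one. Using tableau equivalence (the invariant the Walukiewicz construction actually guarantees), any formula with an equivalent tableau---in particular the original input---must carry a $\mu$-trace that dominates a $\nu$-cycle, which cannot happen in a $\Pi^{\mu}_2$ tableau. This cycle/trace argument on the tableau is the missing idea in your proposal.
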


The proof, in Appendix \ref{app-preserve-pi}, uses the concepts of tableau, tableau equivalence, and traces from \cite{Walukiewicz2000}. The crux of the argument
is that the tableau of a disjunctive formula not in $\Pi^{\mu}_2$ must have an even cycle nested in an odd cycle which in turn implies
the existence of a trace on which a $\mu$-fixpoint dominates a $\nu$ fixpoint in any equivalent tableau.

Note that the dual is not true:
a $\Sigma^{\mu}_2$ formula may yield a formula of arbitrarily large alternation depth when turned into disjunctive form \cite{lehtinendisjunctive}.
This is in line with alternating B\"uchi automata being equivalent to non-deterministic B\"uchi automata while the same is not
necessarily true for co-B\"uchi automata.

 \subsection{Automata and $L_{\mu}$}\label{sec-non-det-automata}

 The relationship between $L_\mu$ and automata theory is based on the fact that the automata model that $L_\mu$ formulas correspond to is, when restricted to 
 binary trees\footnote{assuming $|\mathit{Act}|=2$; otherwise trees with one successor per label.}, equivalent to alternating automata with a parity condition \cite{janin1995automata}.
 The model-checking problems in these two settings are equivalent:
 Model-checking a formula $\psi$ on a structure $\mathcal{M}$ reduces to checking an automaton $A(\psi)$ on a binary tree encoding of $\mathcal{M}$. Model checking disjunctive $L_\mu$ similarly reduces to model checking non-deterministic automata, albeit one of potentially higher index. For the index problem, the comparison is not as simple and to the best of our knowledge there is no known reduction from the (disjunctive) $L_\mu$ index problem to the (non-deterministic) automata index problem.
Part of the difficulty is that only considering binary trees affects the semantic complexity of formulas: for example, the formula $\langle a \rangle \psi \wedge \langle a \rangle \bar \psi$ (where $\bar \psi$ is the negation of $\psi$) is semantically trivial
 when interpreted on trees with only one $a$-successor while in the general case its index depends on $\psi$.
Furthermore, non-deterministic parity automata are weaker than disjunctive $L_\mu$ in the sense that some properties of binary trees can be expressed with a lower index using disjunctive form.

 \section{Deciding $\Sigma^{\mu}_2$ Reduces to a Bounding Problem}\label{sec-main}

 The first part of the proof of our main result defines a parametrised {\em $n$-challenge game} 
 on a parity game arena.
 For each finite $n$, the $n$-challenge game is described by a $\Sigma^{\mu}_2$ formula $\Psi^n$ which holds in $\mathcal{M}$ if and only if
 Even wins the $n$-challenge game on $\mathcal{M}\times \Psi$.
 We then show that a disjunctive formula $\Psi$ is equivalent to a (not necessarily disjunctive) formula in $\Sigma^{\mu}_2$ if and only if
 there is some $n$ such that $\Psi$ is equivalent to $\Psi^n$.
 As any formula can be turned into disjunctive form, this reduces the decidability of $\Sigma^{\mu}_2$ to bounding the parameter $n$.
 For the main result
 of this paper, we will only use this construction for $\Pi^{\mu}_2$ input formulas to determine equivalence to a $\Sigma^\mu_2$ formula. However, using this more general construction, a generalisation of the second part of our proof beyond $\Pi^\mu_2$ would suffice to decide $\Sigma^\mu_2$ entirely.

 When restricted to automata on binary trees and two priorities, this construction is equivalent to those found for example in \cite{colcombet2013deciding} and \cite{skrzypczakdeciding}.\\

We fix a disjunctive formula $\Psi$ with index $\{q,...,0\}$. Let $I = \{q,...,0\}$ if the maximal priority $q$ is even and $\{q+1,q,...,0\}$ otherwise.
Write $I_e$ for the even priorities in $I$.
The $n$-challenge game consists of a normal parity game augmented with a set of challenges, one for each even priority $i$. A challenge can
either be {\em open} or {\em met} and has a counter $c_i$ attached to it. 
Each counter is initialised to $n$, and decremented when the corresponding challenge is opened. The Odd player
can at any point open challenges of which the counter is non-zero, but he must do so in decreasing order: an $i$-challenge can only be opened if
every $j$-challenge for $j>i$ is opened. 
When a play encounters a priority greater or equal to $j$ while the $j$-challenge is open, the challenge is said to be met. All $i$-challenges for $i<j$ are {\em reset}. This means that
the counters $c_i$ are set back to $n$.

A play of this game is a play in a parity game, augmented with the challenge and counter configuration at each step. A play with dominant priority $d$ is winning for
Even if either $d$ is even or if every opened $d+1$ challenge is eventually met or reset.

\begin{example}
 The formula $\nu Y. \mu X. (A\wedge \Diamond X) \vee (B\wedge \Diamond Y)$  is true if on some path B always eventually holds.
 This formula does not hold in this structure:
 \begin{figure}
\begin{tikzpicture}[->,>=stealth',shorten >=1pt,auto,node distance=2cm,
                    semithick]
  \tikzstyle{every state}=[fill=none,draw=black,text=black]

  \node[initial,state] (A)                    {$A$};
  \node[state]         (B) [right of=A] {$B$};
  \node[state]         (C) [right of=B] {$A$};
  \node[state]         (D) [right of=C] {$B$};
  \node[state]         (E) [right of=D] {$A$};

  \path (A) edge [loop above] node {} (A)
            edge              node {} (B)
        (B) edge              node {} (C)
        (C) edge [loop above] node {} (C)
            edge              node {} (D)
        (D) edge              node {} (E)
        (E) edge [loop above] node {} (E);
\end{tikzpicture}
\end{figure}

 However, Even wins the $1$- and $2$-challenge games: her strategy is to loop in the current state until Odd opens a $2$-challenge, then meet the challenge by moving to the next
 state, as seeing a $B$ corresponds to seeing $2$. Odd will run out of challenges before reaching the
 last state. Although Odd wins the $3$-challenge game in this structure, for any $n$ it is easy to construct a similar structure
 in which he loses the $n$-challenge game but wins the parity game. This section argues that this is sufficient to show that $\nu Y. \mu X. (A\wedge \Diamond X) \vee (B\wedge \Diamond Y)$ 
 is not equivalent to any $\Sigma^{\mu}_2$ formula.
 
 In contrast, in the formula $\nu Y. \mu X. (A\wedge \Box X) \vee (B\wedge \Diamond Y)$, Odd wins the $1$-challenge game whenever he wins the parity game: he can open the challenge
 when his strategy in the parity game reaches the point at which he can avoid $B$. This formula is therefore
 equivalent to a $\Sigma^{\mu}_2$ formula, namely the alternation free formulas $\nu Y. ((A\wedge \Box Y) \vee (B\wedge \Diamond Y)) \wedge \mu X. (A\wedge \Box X) \vee B)$.
 \end{example}

\begin{definition}
 A configuration $(v,p,\bar c,r)$ of the $n$-challenge game on a parity game $G$ of index $\{q,...,0\}$ where $q$ is even consists of:
 \begin{itemize}
  \item a position $v$ in the parity game;
  \item an even priority $p$ indicating the least significant priority on which a challenge is open or $p = q+2$ if all challenges are currently met;
  \item $\bar c=(c_{0},c_{2},\ldots,c_{q})$ a collection of counter values $c_i$ for each even priority $i$.
  \item $r\in \{0,1\}$ indicating the round of the game: $1$ for Odd's turn to open challenges, $0$ for a turn in the parity game.
 \end{itemize}

At a configuration $(v,p,\bar c,1)$, corresponding to Odd's turn, he can open challenges up to any $p'\leq p$, as long as $c[i]>0$ for each $i$ such that $p'\leq i <p$.
Then the configuration becomes $(v,p',\bar c',0)$ where
$c'[i]=c[i]-1$ for all newly opened challenges $i$, that is to say $i$ such that $p'\leq i<p$ and $c'[i]=c[i]$ for all other $i$.

At the configuration $(v,p,\bar c,0)$, the player whose turn it is in the parity game decides the successor position $v'$ of $v$ and the configuration is updated to
$(v',p',\bar c',1)$ according to the priority $i$ of $v'$ as follows:
\begin{itemize}
 \item If $i\geq p$ then $p'=i+2$ if $i$ is even, $p'=i+1$ otherwise. This indicates which challenges have been met. Note that if all challenges are met, $p=q+2$.
 \item For each $j<i$, the counter value $c_j$ is reset to $n$.
 \item If $i$ is even and $c_i=0$, then the game ends immediately with a win for Even.
\end{itemize}

A play is a potentially infinite sequence of configurations starting at the initial configuration
$(v_\iota,q+2,(n,...,n),1)$, where $v_\iota$ is the initial position of the parity game. An infinite play is winning for Even if
the dominant priority on the sequence of parity game positions is $d$ but the game reaches infinitely many configurations $(v,p,\bar c,0)$ where $p>d+1$.
This is the case if $d$ is even or if all $d+1$ challenges set by Odd are either met or reset.

A strategy for Odd in a challenge game consists of two parts: a strategy which dictates when to open challenges, and a regular parity-game strategy which dictates his moves in the underlying parity game.
Even only has a parity game strategy. Both players' strategies may of course depend on the challenge configuration as well as the parity game configuration.
Given a challenge-game strategy for even $\sigma$, a challenging strategy $\gamma$ for Odd induces a normal parity game strategy $\sigma_\gamma$ for Even which does not depend on the challenge configuration.

\end{definition}

We first establish that the winning regions of the $n$-challenge games for $\Psi$ can be described by a $\Sigma^\mu_2$ formula $\Psi^n$.

\begin{lemma}\label{automaton-formula}
For all $\Psi$ and finite $n$, there is a formula $\Psi^n \in \Sigma^\mu_2$ which holds in $\mathcal{M}$ if and only if Even wins the $n$-challenge game
on $\mathcal{M}\times \Psi$.
\end{lemma}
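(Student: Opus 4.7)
The plan is to build $\Psi^n$ by explicitly encoding the finite challenge state $(p,\bar c, r)$ into the fixpoint variables of the formula. Since the challenge state takes at most $(q+2)\cdot(n+1)^{|I_e|}\cdot 2$ values, we introduce, for each subformula $\phi$ of $\Psi$ and each challenge state $c$, a fresh fixpoint variable $X_{\phi,c}$, and set $\Psi^n := X_{\Psi, c_\iota}$ for the initial state $c_\iota = (q+2,(n,\ldots,n),1)$, so that the model-checking game of $\Psi^n$ on $\mathcal{M}$ simulates the $n$-challenge game on $\mathcal{M}\times\Psi$ position by position.

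For non-fixpoint subformulas, $X_{\phi,c}$ is defined by the natural structural recursion that mirrors the model-checking game on $\Psi$: disjunctions, disjunctive-form conjunctions, modalities and literals each translate directly, and the challenge state is threaded through and updated by the trivial minimum-priority visits at these positions. For a fixpoint variable $Y$ of $\Psi$ with priority $i=\Omega_\Psi(Y)$, the body of $X_{Y,c}$ first simulates Odd's choice of how to open challenges from $c$ as a finite conjunction over all valid opening targets $c'$, then applies the priority-$i$ update to $c'$ to produce the resulting state $c''$ (replacing the branch by $\top$ whenever the update triggers an immediate Even win), and finally defers to $X_{\phi_Y, c''}$, where $\phi_Y$ is the binding formula of $Y$ in $\Psi$.

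The central design choice is the $\mu$ versus $\nu$ binding of each $X_{Y,c}$ together with the nesting of these bindings. The aim is to place all priority-$1$ variables in an outer $\mu$-block and all priority-$0$ variables in an inner $\nu$-block, so that $\Psi^n$ has the shape $\mu\bar X.\nu\bar Y.\Phi$; this guarantees validity of the priority assignment because the $\nu$-block is nested strictly inside the $\mu$-block, so no priority-$0$ variable occurs free in the binding formula of any priority-$1$ variable, giving index $\{1,0\}$ and hence $\Psi^n\in\Sigma^\mu_2$. The priority of $X_{Y,c}$ itself should be chosen so that Even winning the parity game on $\mathcal{M}\times\Psi^n$ tracks the $n$-challenge winning condition: the natural candidate is to assign priority $0$ to $X_{Y,c}$ when the visit at $(Y,c)$ is safe for Even (either $i=\Omega_\Psi(Y)$ is even, or $i$ is odd and $c$ already has $p>i+1$), and priority $1$ otherwise.

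The main obstacle is correctness: showing that Even wins $\mathcal{M}\times\Psi^n$ iff she wins the $n$-challenge game on $\mathcal{M}\times\Psi$. This reduces to matching the co-B\"uchi-style condition ``priority-$1$ visits occur only finitely often'' induced on the product arena $\mathcal{M}\times\Psi\times\mathrm{ChallengeStates}$ with the $n$-challenge winning condition, which demands, for dominant odd $d$, that every opened $(d{+}1)$-challenge be eventually met or reset. The key technical input is the boundedness of the counters together with the fact that resetting $c_{d+1}$ requires a visit to a priority strictly exceeding $d+1$: along any play with dominant odd $d$, such visits occur only finitely often, so Odd can open the $(d{+}1)$-challenge only a bounded number of times, and a case analysis on the dominant priority translates his bounded effective behaviour into finitely many priority-$1$ visits in $\mathcal{M}\times\Psi^n$. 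I expect the delicate part to be ensuring the priority assignment does not over-count as priority $1$ those odd-priority visits that occur strictly below a much higher even dominant priority; handling this may require refining the binding rule, or equivalently enriching each $X_{\phi,c}$ with additional book-keeping so that the co-B\"uchi condition on $\Psi^n$ faithfully mirrors the $n$-challenge winning condition.
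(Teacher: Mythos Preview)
Your construction is essentially the paper's: take the product of the model-checking arena for $\Psi$ with the finite space of challenge configurations $(p,\bar c)$ and equip it with priorities in $\{0,1\}$, obtaining a $\Sigma^\mu_2$ description of the $n$-challenge winning region. The paper carries this out at the level of alternating parity automata (one copy $A(p,\bar c)$ of the automaton for $\Psi$ per configuration, conjunctive transitions for Odd's challenge-opening, component changes for met and reset challenges, and $\top$ when a counter is exhausted), then appeals to the automata--$L_\mu$ correspondence; you work directly on formulas. The paper's priority rule is $\Omega^{(p,\bar c)}(s)=1$ iff $\Omega(s)\ge p-1$, so even ``meet'' events also get priority $1$, whereas you give them $0$; this does not change the co-B\"uchi winner. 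One small point to tighten: the priority of a visit should depend on the value of $p$ \emph{after} Odd's opening move, not on the $p$ recorded in the pre-opening state $c$; your description binds the priority to $X_{Y,c}$ before the conjunction over Odd's choices, which would undercount priority-$1$ visits. Splitting into $r=1$ and $r=0$ variables, as your state already allows, fixes this.

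The concern you raise at the end is not a genuine obstacle, and the paper's correctness argument dissolves it. Along any play with dominant priority $d$, eventually no priority above $d$ occurs, so the counter $c_{d'}$ (with $d'=d$ if $d$ is even, $d'=d+1$ if $d$ is odd) is never reset again; since bringing $p$ down to $\le d'$ forces a decrement of $c_{d'}$, Odd can do so only finitely often, and the component $p$ eventually stabilises (or the play ends in $\top$). If $p$ stabilises with $p>d+1$ every subsequent visit has priority $0$ and Even wins; if $d$ is odd and $p$ stabilises at $d+1$, the recurring $d$-visits give priority $1$ infinitely often and Odd wins --- in both cases matching the $n$-challenge outcome. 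The odd-priority visits that worried you live entirely in the finite prefix before stabilisation, so they never upset the co-B\"uchi condition; no extra bookkeeping is needed.
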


We prove this lemma by constructing the formula $\Psi^n$.
For clarity, we will describe the alternating parity automata on labelled transition systems (see appendix \ref{app-automata}) corresponding to $\Psi^n$.
From \cite{wilke2001alternating}, this is equivalent to describing a $L_\mu$ formula.

\begin{definition}\label{automaton}
Let $A=(S,s_i,\delta,\Omega)$ be the alternating parity automaton for $\Psi$. 
We build the automata $A^n$ for $\Psi^n$ using distinct copies of $A$ for each possible challenge configuration $(p,\bar c)$.
For each even priority $p$ or $q+2$, and counter values $\bar c\in [n]^{I_e}$, the copy $A(p,\bar c)$ of $A$
corresponds to $p$ being the least significant open priority and the counter values being
$\bar c$. These components will then be combined into the automaton $A^n$.

$A(p,c)=(S^{(p,c)},s^{(p,c)}_i,\delta^{(p,c)},\Omega^{(p,c)})$ is based on $A$ using
copies $S^{(p,c)}$, $s^{(p,c)}_i$ and $S^{(p,c)}$ of
$S,s_i$ and $\delta$ respectively.
The priority function is given by $\Omega^{(p,c)}$:
\begin{itemize}
 \item If $\Omega(s)\geq p-1$ then $\Omega^{(p,c)}(s)=1$;
 \item If $\Omega(s)<p-1$ then $\Omega^{(p,c)}(s)=0$;
 \end{itemize}

The components $A(p,c)$ are linked in $A^n=(S^n,s^n_i,\delta^n,\Omega^n)$ consisting of:

\begin{itemize}
 \item The disjunct union of all component state spaces: $S^n = \biguplus_{p\in I_e,c\in [m]^{I_e}} S^{(p,c)}$;
 \item The initial state $s^n_i = s^{(q+2, \bar n)}_i$ of the component $A(q+2,\bar n)$;
 \item $\Omega^n$ defined by $\Omega^n(s)=\Omega^{(p,c)}(s)$  for $p,c$ such that $s$ is a state of the component $A(p,c)$;
 \item 
 For states $s$ in $A(p,\bar c)$ of original priority $j\geq p$, let $\delta^n(s,A)=\top$ if $c_j=0$. This corresponds to Even having met all n challenges. Otherwise,
 let $\delta^n(s,A)= s'$ such that: $s'$ is the copy of $s$ in $A(k,\bar c')$
 where $k=j+2$ if $j$ is even and $k=j+1$ otherwise, and $\bar c'[i]= n$ for $i<j$ and $\bar c'[i]=\bar c[i]$ for other $i$.
 This corresponds to the open $j$-challenge being met and all counters below $j$ being reset.

 For every state $s$ in $A(p,c)$ with original priority $j< p$, if $K$ is the set
 of even priorities smaller than $p$ such that $\bar c[k]>0$,
 let $\delta^n(s,P)$ be  $\delta^{(p,c)}(s,P) \wedge \bigwedge_{k\in K} s_k$
 where $s_k$ is the copy of $s$ in $A(k,\bar c')$, and $\bar c'[i]=\bar c[i]-1$ for $i$ such that $k\leq i<p$ and $\bar c'[i]=\bar c[i]$ otherwise.
 In other words, Odd can open challenges below $p$ if their counter-values are non-zero, by moving to the component $A(k,\bar c')$ which reflects the new challenge configuration.
 \end{itemize}

\end{definition}

\begin{proof}
The automaton described in Definition \ref{automaton} only has priorities $0$ and $1$ and therefore the corresponding formula $\Psi^n$, is in $\Sigma^{\mu}_2$. It therefore suffices to check that
this automaton indeed describes the winning regions of the challenge game.

A game in $A^n$ maps to a game in $A$, augmented with challenge configurations $(p,\bar c)$ at each state, according to the component
a state is played in. Transitions between components account for challenges being opened, met, and reset according to the rules of the game.

Let us check that $\Omega^n$ implements the winning conditions of the challenge game.
Opening challenges in $A^n$ makes the play move to lower components $A(p,c)$, as measured by $p$;
seeing high {\em original} priorities makes the play move up to higher components.
% Odd cannot open infinitely many challenges on an even priority $d$ without the $d$-counter getting reset by seeing
% a higher priority; 
If the dominant original priority $d$ is even, then eventually the play can no longer move up to
components $A(p',c')$ with $p'>d$ from components $A(p,c)$ where $p<p'$.
Such plays eventually settle into some
$A(p,c)$ where $p> d$. Such a play is winning for Even:
it eventually only sees priority $0$. 

If $d$ is odd, then Even wins only if the play settles into some $A(p,c)$ where $p>d+1$ since those are the components in which
$d$ and lower priorities are replaced with $0$ -- this corresponds to Odd eventually not opening the challenge on $d+1$ after it has been met,
causing him to lose. If the minimum challenged priority never settles, this means the highest original priority $d$ seen infinitely often is odd
and that a $d+1$-challenge is not met -- that is to say, Odd wins the challenge game. In $A^n$ such a play is also winning for Odd since
resetting and meeting challenges corresponds to seeing priority $1$.

Therefore the automaton only accepts parity games in which Even wins the $n$-challenge game.
\end{proof}

Next
we prove our core theorem, reducing the decidability of $\Sigma^\mu_2$ to a boundedness criterion.

\begin{theorem}\label{PsiPsiM}\label{thmpsipsin}
 If a disjunctive formula $\Psi$ is semantically in $\Sigma^{\mu}_2$, then there is a finite $n$ such that  $\Psi\Leftrightarrow\Psi^n$. 
\end{theorem}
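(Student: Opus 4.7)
The theorem asserts an equivalence $\Psi\Leftrightarrow\Psi^n$, so I plan to prove the two directions separately. The direction $\Psi\Rightarrow\Psi^n$ holds for every $n$ without any hypothesis on $\Psi$: given $\mathcal{M}\models\Psi$, take any winning strategy $\sigma$ for Even in the parity game on $\mathcal{M}\times\Psi$, under which the dominant priority of every resulting play is even. Deploying $\sigma$ verbatim in the $n$-challenge game (Even has no challenge moves of her own), the parity projection of any resulting play is still a play of $\sigma$ against some Odd parity strategy, so its dominant priority $d$ is even; the $n$-challenge winning condition is then satisfied by its ``$d$ is even'' disjunct, regardless of how Odd schedules challenges.

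For the harder direction $\Psi^n\Rightarrow\Psi$, fix a $\Phi\in\Sigma^{\mu}_2$ equivalent to $\Psi$ and aim to exhibit a uniform $n$, depending only on $\Phi$, such that $\Psi^n\Rightarrow\Phi$ (which is equivalent to $\Psi^n\Rightarrow\Psi$ by assumption). I plan to argue by contradiction: suppose for every $n$ there is a model $\mathcal{M}_n$ with $\mathcal{M}_n\models\Psi^n$ but $\mathcal{M}_n\not\models\Phi$, so that Even wins the $n$-challenge game on $\mathcal{M}_n\times\Psi$ while Odd wins the co-B\"uchi game on $\mathcal{M}_n\times\Phi$. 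The crux is to extract, from Even's $n$-challenge strategy on $\mathcal{M}_n\times\Psi$, a winning strategy for Even on $\mathcal{M}_n\times\Phi$, provided $n$ exceeds a bound determined by $\Phi$ (for instance the size of its alternating automaton). This uses the disjunctive form of $\Psi$ together with Lemma~\ref{below}, so that strategies on $\mathcal{M}_n\times\Psi$ can be assumed well-behaved with one play per branch; these plays project to paths in $\mathcal{M}_n$ that lift to plays in $\mathcal{M}_n\times\Phi$, giving the bridge between the two games.

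The main obstacle is the extraction step itself. The intuition is that Even's $n$-challenge strategy implicitly assigns a ``signature'' to each play in $\mathcal{M}_n\times\Psi$: the challenge counters measure how many times Odd can force Even into a priority-$(j{-}1)$ region before some priority-$\geq j$ response appears, with $n$ bounding these measures. When $n$ exceeds the structural complexity of $\Phi$, this bounded signature should force the co-B\"uchi condition of $\Phi$'s game to hold along every lifted play: priority $1$ cannot occur infinitely often on a play of the extracted strategy, since otherwise Odd could use the priority-$1$ recurrence in $\Phi$'s game to design a challenging schedule in $\Psi$'s game that exhausts the counter $n$, contradicting Even's assumed win. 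Making this correspondence — between the $n$-challenge signatures on $\Psi$'s fixpoint hierarchy and the single-alternation co-B\"uchi rank on $\Phi$ — precise is the bulk of the technical work, and is exactly where the $\Sigma^{\mu}_2$ hypothesis is decisive.
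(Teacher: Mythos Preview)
Your easy direction $\Psi\Rightarrow\Psi^n$ is fine and matches the paper's implicit use of it.

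The hard direction, however, has a genuine gap at exactly the point you flag as ``the main obstacle.'' You propose to \emph{extract from Even's $n$-challenge strategy on $\mathcal{M}_n\times\Psi$ a winning strategy for Even on $\mathcal{M}_n\times\Phi$}. But there is no mechanism for this: the positions of $\mathcal{M}_n\times\Phi$ are pairs $(v,\phi)$ with $\phi$ a subformula of $\Phi$, and nothing in Even's $\Psi$-strategy tells her which disjunct of $\Phi$ to pick or which successor to move to at such a position. Your suggested bridge --- project a well-behaved $\Psi$-play to a branch of $\mathcal{M}_n$ and ``lift'' it to a play in $\mathcal{M}_n\times\Phi$ --- does not yield a strategy, because a branch does not determine a play in $\mathcal{M}_n\times\Phi$; Odd's choices there are independent of anything happening in the $\Psi$ game. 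More fundamentally, Even \emph{does not} win $\mathcal{M}_n\times\Phi$: by assumption $\mathcal{M}_n\not\models\Phi$, and since $\Phi\Leftrightarrow\Psi$ also $\mathcal{M}_n\not\models\Psi$. So any argument that stays inside $\mathcal{M}_n$ and tries to produce an Even win for $\Phi$ there is doomed.

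The paper's proof runs in the opposite direction and, crucially, changes the model. It takes Odd's winning strategy $\tau$ in $\mathcal{M}\times\Phi$ and uses K\"onig's lemma to turn it into a \emph{challenging schedule} $\gamma$ for Odd in the $n$-challenge game on $\mathcal{M}\times\Psi$: open the next challenge only once every $\tau$-play on the current branch has been forced through a priority~$1$. Against this $\gamma$, Even's winning challenge strategy $\sigma$ induces a well-behaved parity strategy $\sigma_\gamma$ whose plays all terminate by meeting $n$ challenges of some even priority in a row. Taking $n>2^{|\Psi|+|\Phi|}$ (note the dependence on \emph{both} formulas, not just $\Phi$), one finds on each branch two challenge-opening nodes $v,w$ that agree on the subformulas of $\Psi$ reached by $\sigma_\gamma$ \emph{and} on the subformulas of $\Phi$ reached by $\tau$. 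Redirecting the edge into $w$ back to $v$ yields a new structure $\mathcal{M}'$ in which $\sigma_\gamma$ becomes a winning parity strategy for Even in $\mathcal{M}'\times\Psi$ (the pumped segment is even-dominated) while $\tau$ remains winning for Odd in $\mathcal{M}'\times\Phi$ (the pumped segment sees a~$1$ by the design of $\gamma$). This contradicts $\Psi\Leftrightarrow\Phi$. The idea you are missing is precisely this pumping step: the contradiction lives in a \emph{modified} model, not in $\mathcal{M}_n$ itself.
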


\begin{proof}

Assume that $\Psi$ is semantically in $\Sigma^{\mu}_2$, {\em i.e.} equivalent to some $\Phi$ of index $\{1,0\}$, and that for all $n$, $\Psi\nLeftrightarrow \Psi^n$. Fix $n$ to be larger than $2^{|\Psi|+|\Phi|}$. There is a structure $\mathcal{M}$,
such that Odd wins the parity game $\mathcal{M}\times \Psi$
but Even wins the $n$-challenge game on $\mathcal{M}\times \Psi$. W.l.o.g, take $\mathcal{M}$ to be finitely branching.
The overall structure of this proof is to first use a winning strategy $\tau$ for Odd in $\mathcal{M}\times \Phi$ to define a challenging strategy $\gamma$
for him in the $n$-challenge game on $\mathcal{M}\times \Psi$ (Part I). We then use Even's winning strategy $\sigma$ to add back edges to $\mathcal{M}$ (Part II), turning it into a new
structure $\mathcal{M}'$ which preserves Odd's winning strategy $\tau$
in $\mathcal{M}'\times \Phi$ while turning $\sigma_\gamma$ into a winning strategy in $\mathcal{M}'\times \Psi$ (Part III). This contradicts the equivalence of $\Phi$ and $\Psi$.

{\bf Part I.}
Let $\tau$ be Odd's winning strategy in $\mathcal{M}\times \Phi$.
Since $\mathcal{M}$ is finitely branching, for any node $v$ reachable via $\tau$, there is a finite bound $i$ such that
any play that agrees with $\tau$ sees $1$ within $i$ modal steps of any position $v\times \alpha$ that it reaches (K\"onig's Lemma).
For a branch $b$ of $\mathcal{M}$, on which $\tau$ reaches a node $v$, indicate by $\mathit{next}(b,v)$ the
$i^\mathit{th}$ node on $b$ from $v$. This node has the property that any play on the branch $b$ agreeing with $\tau$ must see a $1$ between $v$ and $\mathit{next}(b,v)$.

If $\tau$ does not agree with any plays on the branch $b$, then let $\mathit{next}(b,v)$ be a node on $b$ which $\tau$ does not reach.

Now consider the $n$-challenge game on $\mathcal{M}\times \Psi$. Let Odd's challenging strategy $\gamma$ be: to
open all challenges at the start of the game, and whenever its counter is reset;
if a challenge for a priority $i$ is met at $v$, and its counter $c_i$ is not at $0$, to open the next challenge when the play reaches a node $\mathit{next}(b,v)$ for any branch $b$,
unless the counter is reset before then ({\em i.e.} a higher priority is seen).

{\bf Part II.}
Even wins the $n$-challenge game on $\mathcal{M}\times \Psi$, so let $\sigma$ be her winning strategy. Recall that $\sigma_\gamma$ is an Even's strategy
for $\Psi$ up to the point where an $n^{\mathit{th}}$ challenge in the original challenge game is met, and undefined thereafter.
Since $\Psi$ is disjunctive, we can adjust $\mathcal{M}$ into a bisimilar structure in which the pure parity game strategy $\sigma_\gamma$ is well-behaved wherever it is defined --  it reaches
each position of $\mathcal{M}$ at either one subformula, or none.

The strategy $\sigma_\gamma$ is winning in the challenge game against any strategy for Odd which uses the challenging strategy $\gamma$. Since Odd always  eventually opens the next challenge,
the only way for him to lose is that the play reaches a position of priority $p$ when $c_p = 0$. Thus, every play is finite.

Since $\sigma_\gamma$ is well-behaved, each branch carries at most one play. For every branch $b$ the finite play it may carry must end in a long streak in which the highest priority seen is some even $p$, and it is seen at least $n$ times, corresponding to every instance of Even meeting a $p$-challenge.
As long as $n$ is sufficiently large, on every such branch there are two nodes $v$ and its descendant $w$, at which Odd opens challenges on $p$, which agree on the set of subformulas that $\sigma_\gamma$ 
reaches there in $\mathcal{M}\times \Psi$ and that $\tau$ reaches there in $\mathcal{M}\times \Phi$. We now consider the structure $\mathcal{M}'$, which is as $\mathcal{M}$ except that the predecessor of each $w$-node has an
edge to $v$ instead. The strategies $\sigma_\gamma$ and $\tau$ transfer in the obvious way to $\mathcal{M}'$.

{\bf Part III.}
We now claim that $\tau$ is winning in $\mathcal{M}'\times \Phi$ and that $\sigma_\gamma$ is winning in $\mathcal{M}' \times \Psi$. 
Starting with $\sigma_\gamma$, consider plays that do not go through back edges infinitely often. On these the dominant priority is even, as in the challenge game on $\mathcal{M}\times \Psi$.
Any play in $\mathcal{M}\times \Psi$
that agrees with $\sigma_\gamma$ which sees both $v$ and $w$ is dominated by an even priority between $v$ and $w$.
Then, as the $w$ and $v$ agree on which subformula $\sigma_\gamma$ reaches them at, an even priority dominates any play that goes through back edges in $\mathcal{M}'\times \Psi$ infinitely many times.
The strategy $\sigma_\gamma$ is therefore winning in $\mathcal{M}'\times \Psi$.

Now onto $\tau$ in $\mathcal{M}'\times \Phi$. If a branch is unchanged by the transformation, then any play on it is still winning for $\tau$, because such a play would be consistent with $\tau$ in the original game.
If a branch that $\tau$ plays on has been changed, then consider in $\mathcal{M}$ the two nodes $v$ and $w$ at which the transformation is done.
These both are nodes at which Odd opens challenges according to $\gamma$, therefore, from the definition of $\mathit{next}$ and $\gamma$, the highest priority seen between them by any play agreeing with $\tau$ is $1$.
Since $v$ and $w$ agree on which subformulas $\tau$ reaches them at, any play in $\mathcal{M}'\times \Phi$ which goes through a back-edge infinitely often sees $1$ infinitely often and is therefore winning for Odd.

This contradicts the equivalence of $\Psi$ and $\Phi$. Therefore, if $\Psi$ is semantically in $\Sigma^{\mu}_2$, then for all structures $\mathcal{M}$ the $n$-challenge game and the parity game
on $\mathcal{M}\times \Psi$ have the same winner for $n>2^{|\Phi|+|\Psi|}$.
\end{proof}

\begin{theorem}\label{thmpsipsin}
Let $\Psi\in L_\mu$, and $\Psi_d$ a disjunctive formula equivalent to $\Psi$. Then
 $\Psi$ is semantically in $\Sigma^{\mu}_2$ if and only if there is some finite $n$ such that $\Psi\Leftrightarrow\Psi_d^n$.
\end{theorem}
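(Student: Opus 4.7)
The plan is to derive this as an almost immediate consequence of the two facts already established: Lemma \ref{automaton-formula}, which states that $\Psi_d^n$ lies syntactically in $\Sigma^\mu_2$, and Theorem \ref{PsiPsiM}, which gives the forward direction restricted to disjunctive inputs. The role of this wrap-up theorem is to lift the disjunctive hypothesis of Theorem \ref{PsiPsiM} away, so that the bounding criterion applies to arbitrary $L_\mu$ formulas via their disjunctive normal form.

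For the forward direction, I would argue as follows. Assume $\Psi$ is semantically in $\Sigma^\mu_2$. Since $\Psi_d \Leftrightarrow \Psi$, the formula $\Psi_d$ is also semantically in $\Sigma^\mu_2$. As $\Psi_d$ is itself disjunctive, Theorem \ref{PsiPsiM} applies directly and yields a finite $n$ such that $\Psi_d \Leftrightarrow \Psi_d^n$. Transitivity of equivalence then gives $\Psi \Leftrightarrow \Psi_d^n$, as required.

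For the backward direction, suppose there exists a finite $n$ with $\Psi \Leftrightarrow \Psi_d^n$. By Lemma \ref{automaton-formula}, the formula $\Psi_d^n$ is a $\Sigma^\mu_2$ formula (the construction in Definition \ref{automaton} produces an automaton with only priorities $0$ and $1$, corresponding to a syntactic $\Sigma^\mu_2$ formula). Hence $\Psi$ is equivalent to a $\Sigma^\mu_2$ formula, i.e.\ semantically in $\Sigma^\mu_2$.

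There is no real obstacle here; the theorem is a bookkeeping step that uses the effective disjunctive-form transformation of \cite{Walukiewicz2000} to reduce the general problem to the disjunctive case treated in Theorem \ref{PsiPsiM}. The only subtlety worth mentioning is that the parameter $n$ provided by the forward direction depends on $|\Psi_d|$ rather than on $|\Psi|$, and that the disjunctive form of a $\Pi^\mu_2$ input is still $\Pi^\mu_2$ by Lemma \ref{disjunctpitwo}, which is what makes this reduction usable for the main decidability result in the $\Pi^\mu_2$ case treated in the rest of the paper.
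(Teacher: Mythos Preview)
Your proposal is correct and matches the paper's intent. In fact, the paper states this theorem without proof, treating it as an immediate consequence of Theorem~\ref{PsiPsiM} and Lemma~\ref{automaton-formula}; your write-up is precisely the straightforward derivation the authors leave implicit.
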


 \section{Deciding $\Sigma^{\mu}_2$ for $\Pi^{\mu}_2$}\label{sec-Fgame}
 
 To complete the proof of the namesake result, it suffices to show that the parameter $m$ from Theorem \ref{thmpsipsin} can be bounded. If we restrict ourselves to disjunctive $\Psi\in\Pi^\mu_2$,
we argue that the tree-building game $\mathcal{F}$ from \cite{skrzypczakdeciding} extends to arbitrary labelled transition systems and delivers such
a bound.

Since the $\mathcal{F}$ game is already well-exposed in \cite{skrzypczakdeciding}, and the adjustments to cater for disjunctive $L_\mu$ and labelled
transition systems are relatively straight-forward but verbose, the technical bulk of this section, 
that is to say the proof of Theorem \ref{restrictm1}, is left to the Appendix \ref{app-restrictm}. We obtain the following theorem.

\begin{theorem}\label{restrictm1}
 Let $\Psi\in \Pi^\mu_2$ be disjunctive. Then there is a constant $K_0$ computable from $\Psi$ such that the following statements are equivalent:
\begin{enumerate}[a)]
 \item{There is some $m$ such that $\Psi\Leftrightarrow\Psi^m$.}
 \item{$\Psi\Leftrightarrow\Psi^{K_0}$}
 \end{enumerate}
\end{theorem}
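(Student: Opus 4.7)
The plan is to adapt the tree-building game $\mathcal{F}$ from \cite{skrzypczakdeciding}, designed for binary trees, to disjunctive $\Pi^\mu_2$ formulas over arbitrary labelled transition systems, then extract the bound $K_0$ from a positional winning strategy in that game. The direction $(b)\Rightarrow (a)$ is immediate: taking $m=K_0$ witnesses the existential statement. All the work is in $(a)\Rightarrow (b)$, where we must show that if \emph{some} $m$ witnesses equivalence to $\Psi^m$, then a specific $K_0$, computable from $\Psi$, already does.

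First I would define the arena of the $\mathcal{F}$ game so that its positions encode (i) a subformula of $\Psi$, (ii) a ``fresh'' node in a counterexample LTS being built on the fly, and (iii) auxiliary bookkeeping about the current challenge configuration of the $m$-challenge game on the structure under construction. The disjunctive form of $\Psi$ is crucial here: conjunctions in $\Psi$ are restricted to the form $\mathcal{A}\wedge\bigwedge_a {\xrightarrow{a}}\mathcal{B}_a$, so by Lemma \ref{below} Even's strategies can be assumed well-behaved (one subformula per branch). This keeps the arena finite: its size depends only on $|\Psi|$ and the (bounded) number of subformula-sets that can appear as ``demands'' along a branch. Extending from binary trees to arbitrary $\mathit{Act}$-labelled transitions amounts to letting Even, at an $\xrightarrow{a}\mathcal{B}_a$ node, spawn one successor per element of $\mathcal{B}_a$ rather than always a left/right pair.

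Next I would verify the intended semantic content of $\mathcal{F}$: Even wins $\mathcal{F}$ if and only if she can build an LTS on which $\Psi$ holds yet Odd can force arbitrarily many priority-$i$ visits without seeing a higher one, i.e.\ on which $\Psi^m$ fails for every $m$. By Theorem \ref{thmpsipsin}, this is exactly the obstruction to $\Psi$ being semantically in $\Sigma^\mu_2$. The forward implication uses that Even's winning positional strategy in $\mathcal{F}$ unfolds into a family of structures defeating $\Psi^m$ for every $m$ simultaneously; the converse turns the challenge-game losses witnessed by varying $m$ into a combined winning strategy for Even in $\mathcal{F}$, using the finite-memory determinacy of parity games to merge the witnesses into one positional object. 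Since $\mathcal{F}$ is a parity game on a finite arena, it is positionally determined, and whoever wins does so with a strategy of size at most $|\mathcal{F}|$, which is computable from $\Psi$. Setting $K_0$ to be (a simple function of) $|\mathcal{F}|$ — essentially the length of the longest simple path in Odd's winning strategy graph times the number of priorities — then gives the required bound: if any $m$ works, Odd's positional strategy for $\mathcal{F}$ can be converted into a strategy in the $K_0$-challenge game on every $\mathcal{M}\times \Psi$ where Odd wins the parity game.

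The main obstacle is the generalisation from binary trees to labelled transition systems. In the original Skrzypczak--Walukiewicz argument, the fixed binary branching and the alphabet-based separation of non-deterministic choices make positional analysis of $\mathcal{F}$ comparatively clean; here the arena must absorb both arbitrary branching structure (potentially many $a$-successors per label) and the additional flexibility disjunctive $L_\mu$ permits over non-deterministic automata (cf.\ the remark at the end of Section \ref{sec-non-det-automata}). The key technical point is to argue that, up to bisimulation, only witness structures with branching degree bounded in $|\Psi|$ need be considered in $\mathcal{F}$ — so that the arena remains finite — and that the subformula-sets tracked along each branch suffice to detect the ``infinite postponement of a higher priority'' behaviour that defeats every $\Psi^m$. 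Once this finiteness and correctness are in place, $K_0$ falls out of the positional-strategy size bound in a standard way.
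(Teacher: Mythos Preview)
Your high-level plan --- adapt the $\mathcal{F}$-game to labelled transition systems and read off $K_0$ from a winning strategy --- matches the paper, but several concrete pieces of the game are missing or inverted, and without them the argument does not go through.

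First, the separation direction is backwards. Since a winning strategy for Even in the parity game is also winning in every $n$-challenge game, $\Psi\Rightarrow\Psi^n$ always holds; a witness to $\Psi\nLeftrightarrow\Psi^n$ is therefore a structure satisfying $\bar\Psi\wedge\Psi^n$, not one on which ``$\Psi$ holds yet $\Psi^m$ fails''. Consequently the $\mathcal{F}$-game must simultaneously track a play of the \emph{negation} $\bar\Psi$ (taken in disjunctive form, possibly of higher index), and Even's task is to build a branch on which she both wins the $\bar\Psi$-play and meets every challenge in the $\Psi$-challenge game. Your positions carry only a subformula of $\Psi$; with the $\bar\Psi$ component absent, the game cannot certify that the structure under construction actually falsifies $\Psi$.

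Second, a single $\Psi$-subformula per position is not enough. Even's strategy in the challenge game may depend on the challenge configuration, so along one branch different challenging strategies induce different $\Psi$-plays. The paper's positions therefore carry a \emph{set} $S$ of active pairs $(f,p)$ with $f\in\mathit{sf}(\Psi)$ and $p\in\{\mathit{open},\mathit{met}\}$, together with a counter $\kappa(f,p)$ for each, linked across rounds by multi-transitions with distinguished bold-faced edges. Lemma~\ref{below} yields well-behavedness only for one fixed challenging strategy $\gamma$ at a time; it does not collapse this set to a singleton.

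Third, the resulting $\mathcal{F}(\omega)$ has a regular but non-parity winning condition (condition (a) on all $\Psi$-traces, together with (b\,I) on some bold-faced trace or (b\,II) on the $\bar\Psi$-trace), so positional determinacy is not available. The paper uses finite-memory determinacy: $K_0$ is obtained by a pumping argument on Odd's \emph{patient} finite-memory winning strategy in $\mathcal{F}(\omega)$. If $K_0$ challenges were ever opened along a bold-faced trace, the memory bound would force a repeated configuration, producing a loop on which a challenge is both opened and met; iterating it makes Even win via (b\,I), contradicting the strategy being winning for Odd. Your ``longest simple path times number of priorities'' heuristic is in the right spirit, but it needs this two-tier $\mathcal{F}(n)/\mathcal{F}(\omega)$ structure and the bold-faced-trace machinery to become an actual bound.
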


Placing everything together, we obtain our final result.

\begin{theorem}
 It is effectively decidable whether any given $\Pi^{\mu}_2$ formula is equivalent to a $\Sigma^{\mu}_2$ formula. By duality, it is also effectively decidable whether any given $\Sigma^{\mu}_2$ formula is equivalent to a $\Pi^{\mu}_2$ formula.
\end{theorem}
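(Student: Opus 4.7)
The plan is to assemble the machinery built up in Sections \ref{sec-main} and \ref{sec-Fgame} into a decision procedure, and then invoke duality. Given a $\Pi^\mu_2$ formula $\Psi$, the first step is to apply the effective transformation into disjunctive form, yielding an equivalent $\Psi_d$ which, by Lemma \ref{disjunctpitwo}, remains in $\Pi^\mu_2$. Semantic membership of $\Psi$ in $\Sigma^\mu_2$ then coincides with that of $\Psi_d$.

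Next, Theorem \ref{thmpsipsin} reduces the question of whether $\Psi_d$ is semantically in $\Sigma^\mu_2$ to the existence of a finite $n$ such that $\Psi_d \Leftrightarrow \Psi_d^n$. Since $\Psi_d$ is a disjunctive $\Pi^\mu_2$ formula, Theorem \ref{restrictm1} collapses this unbounded existential into a single equivalence check $\Psi_d \Leftrightarrow \Psi_d^{K_0}$ for a constant $K_0$ that is computable from $\Psi_d$. The formula $\Psi_d^{K_0}$ itself is effectively built via the automaton construction of Definition \ref{automaton}. To close the loop I would appeal to the decidability of equivalence between two $L_\mu$ formulas, which is a standard consequence of the decidability of $L_\mu$ satisfiability.

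For the dual statement, given $\Psi \in \Sigma^\mu_2$, I would push negation inward to produce a positive-form formula equivalent to $\neg \Psi$ lying in $\Pi^\mu_2$; then $\Psi$ is equivalent to a $\Pi^\mu_2$ formula if and only if $\neg \Psi$ is equivalent to a $\Sigma^\mu_2$ formula, which is decidable by the procedure just described.

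At this stage there is no real obstacle left: the substantive work is already encapsulated in Theorems \ref{thmpsipsin} and \ref{restrictm1}, and the present argument is a composition of effective procedures. The only point requiring any care is verifying that every intermediate construction --- disjunctivisation, the constant $K_0$, the formula $\Psi_d^{K_0}$, and the final equivalence test --- is carried out by an algorithm terminating on all inputs, which each of the cited results guarantees.
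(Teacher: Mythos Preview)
Your proposal is correct and follows essentially the same route as the paper: disjunctivise via Lemma~\ref{disjunctpitwo}, apply Theorem~\ref{thmpsipsin} to reduce to the existence of some $n$ with $\Psi_d \Leftrightarrow \Psi_d^n$, collapse this to the single check $\Psi_d \Leftrightarrow \Psi_d^{K_0}$ via Theorem~\ref{restrictm1}, and handle the dual by negation. You are slightly more explicit than the paper in spelling out that $\Psi_d^{K_0}$ is effectively constructed and that the final equivalence check is decidable via $L_\mu$ satisfiability, but the argument is the same.
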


\begin{proof}
 Given any $\Pi^{\mu}_2$ formula $\Psi$, it can be effectively turned into a disjunctive formula $\Psi_d$ also in $\Pi^{\mu}_2$ (Lemma \ref{disjunctpitwo}).
 Then, Theorem \ref{thmpsipsin} yields that $\Psi_d$ is semantically in $\Sigma^{\mu}_2$ if and only if it is equivalent to $\Psi_d^n$ for some $n$.
 From Theorem \ref{restrictm1}, $\Psi_d\Leftrightarrow \Psi_d^n$ if and only if $\Psi_d\Leftrightarrow\Psi_d^{K_0}$ where $K_0$ is computable from $\Psi$ via $\Psi_d$. Thus, $\Psi$ is semantically in $\Sigma^\mu_2$ if and only if $\Psi\Leftrightarrow\Psi_d^{K_0}$ if and only if $\Psi_d\Leftrightarrow\Psi_d^{K_0}$.
 
 Given any $\Sigma^\mu_2$ formula, it can also be decided whether it is equivalent to a $\Pi^\mu_2$ formula, via checking whether its negation is equivalent to a $\Sigma^\mu_2$ formula.
\end{proof}

\section{Discussion} \label{sec-discussion} 

We have shown that given any $L_\mu$ formula in $\Pi^{\mu}_2$, it can be effectively decided whether it is equivalent to a $\Sigma^{\mu}_2$ formula.
This result is the $L_\mu$-theoretic counterpart of the decidability of weak definability of B\"uchi definable languages \cite{colcombet2013deciding,skrzypczakdeciding}.
The core contribution is the reduction of the decidability of $\Sigma^{\mu}_2$ for arbitrary $L_\mu$ formulas
to deciding whether the $n$-challenge game is equivalent to the model-checking parity game of a formula for any $n$. We obtain
a family of parameterised $\Sigma^{\mu}_2$ formulas $\Psi^n$ such that $\Psi$ is in $\Sigma^{\mu}_2$ if an
only if $\Psi$ is equivalent to $\Psi^n$ for some $n$.
Unfortunately, the second
part of our proof, based on \cite{skrzypczakdeciding}, is less general and only admits input formulas in $\Pi^{\mu}_2$. If this could also be generalised to
arbitrary formulas, this would yield a decidability proof for $\Sigma^{\mu}_2$.

The challenge game can be extended to constructions described by more complex $L_\mu$ formulas -- this may
turn out to be the right way to characterize higher alternation classes.
However, for Theorem \ref{PsiPsiM}, if there are more than two priorities at play, the different plays along one branch become less manageable and it is not clear how
they can inform a challenging strategy. Even when restricted to disjunctive formulas, a new technique seems to be required. 
However, the result of \cite{colcombet2008non} which achieves this for non-deterministic automata on binary trees justifies cautious optimism for the  disjunctive case.
\\

\paragraph{Achnowledgements}
We thank the anonymous reviewers for their thoughful comments and corrections.
The work presented here has been supported by an EPSRC doctoral studentship at the University of Edinburgh.

\bibliographystyle{splncs}
\bibliography{Cie.bib}

\appendix

Appendix \ref{app-automata} defines our automata notation.
Appendix \ref{app-preserve-pi} proves that the transformation into disjunctive from turns $\Pi^\mu_2$ formulas into disjunctive $\Pi^\mu_2$ formulas.
Finally, Appendix \ref{app-restrictm} contains in detail the second part of our main decidability proof, scetched in  Section \ref{sec-Fgame}.

 \section{$L_\mu$ automata}\label{app-automata}
 
 $L_\mu$ formulas are known to be equivalent to alternating parity automata \cite{wilke2001alternating} on labelled transition systems. 
 For notational purposes, we recap the definition of alternating parity automata which match the syntax
of $L_\mu$ used in this paper.
 \begin{definition}
  An alternating parity automaton is a tuple $A=(Q,q_i,\delta,\Omega)$ where
  \begin{itemize}
   \item $Q$ is a finite set of states;
   \item $q_i\in Q$ is the initial state;
   \item $\delta$ is a transition function which maps each state to a {\em transition condition} as defined below;
   \item $\Omega:Q\rightarrow I$ is a priority function.
  \end{itemize}
  
  A transition condition is:
  \begin{itemize}
   \item $\top,\bot$;
   \item $P,\neg P$ for $P\in \mathit{Prop}$;
   \item $q,[a]q, \langle a\rangle q$ for $q\in Q$ and $a\in \mathit{Act}$;
   \item $q\vee q',q\wedge q'$ for $q,q'\in Q$.
  \end{itemize}
  
We assume the standard accepting condition and correspondence with $L_\mu$ formulas on labelled transition systems. For further details, refer to \cite{wilke2001alternating}.
 \end{definition}

In contrast, a non-deterministic automaton on labelled binary trees is a particular alternating parity automaton with a transition function restricted to $\delta(s)=
\bigvee_i\bigwedge_j(q_{ij},a_{ij})$ where for each $i$ the $a_{ij}$ are pairwise different. Note that since there is only one successor for each label,
$(q_{ij},a_{ij})$ is equivalent to both  $[a_{ij}]q_{ij}$ and  $\langle a_{ij} \rangle q_{ij}$.

\section{Disjunctive form preserves $\Pi^{\mu}_2$}\label{app-preserve-pi}

Recall that Lemma \ref{disjunctpitwo} states that given a $\Pi^{\mu}_2$ formula, the transformation into disjunctive form from \cite{Walukiewicz2000} yields a disjunctive $\Pi^{\mu}_2$
 formula.

\begin{proof}
 For full details of the transformation, see \cite{Walukiewicz2000}. Given this transformation of a formula into a tableau-equivalent disjunctive formula,
 the argument that it preserves $\Pi^{\mu}_2$ is relatively simple. 
  Assume that a disjunctive formula $\Psi$ is not in $\Pi^{\mu}_2$ and therefore
 has a cycle with a $\mu$-bound fixpoint $X$ which is more significant than a $\nu$-bound fixpoint $Y$.
 Consider a tableau-equivalent formula $\Phi$. 
 
 Since the tableaus of $\Psi$ and $\Phi$ are tableau equivalent, they agree on the parity of infinite paths. In particular, any cycle dominated by a $\nu$-bound $Y$ according to the tableau of $\Psi$
 is even, {\em i.e.}, has no $\mu$-traces in either tableau, while all paths dominated by a $\mu$-bound $X$ according to the tableau of $\Psi$ have a $\mu$-trace in both tableaus.
 We consider a cycle  $\pi_X$  which, in the tableau of $\Psi$, is dominated by a $\mu$-bound $X$ but sees a cycle dominated by a $\nu$-bound $Y$ many times -- say $n$ times -- without 
 seeing a more significant fixpoint
 in between. Choose such a path for $n$ larger than the largest label in the tableau of $\Phi$. This ascertains that any trace, in particular any $\mu$-trace in the tableau of $\Phi$,
 on $\pi_X$ reaches the node labelled by $Y$
 twice at the same subformula $\alpha$ while it goes through the $Y$-cycle $n$ times. 
 The highest priority $p$ such a trace sees between the two instances of $\alpha$
 has to be even, since the $Y$-cycle is even. However if the trace is a $\mu$-trace, as at least one trace on $\pi_X$ must be in the tableau of $\Phi$, then it has to be dominated
 by an odd priority that must therefore be more significant than $p$. Such traces do not exist in the tableaus of $\Pi^{\mu}_2$ formulas, so $\Phi$ is not $\Pi^{\mu}_2$.
 
 Therefore, any $\Pi^{\mu}_2$ formula is only tableau equivalent to disjunctive $\Pi^{\mu}_2$ formulas.
\end{proof}

\section{Proof of Theorem \ref{restrictm1}}\label{app-restrictm}

This section aims to prove that given a disjunctive
 $\Psi\in \Pi^\mu_2$, there is a constant $K_0$ computable from $\Psi$ such that the following statements are equivalent:
\begin{enumerate}[a)]
 \item{There is some $m$ such that $\Psi\Leftrightarrow\Psi^m$.}
 \item{$\Psi\Leftrightarrow\Psi^{K_0}$}
 \end{enumerate}
 
 To do so, we extend the $\mathcal{F}$-game introduced in \cite{skrzypczakdeciding} to the $L_\mu$ setting and show that Even wins $\mathcal{F}(n)$ if and only if
 $\Psi \nLeftrightarrow \Psi^n$. The proof of decidability for $\mathcal{F}$ follows 
 \cite{skrzypczakdeciding} closely.
 
Let $\Psi$ be a formula in disjunctive form, of index $I = \{2,1\}$.
Fix $\bar \Psi$, the negation of $\Psi$ in disjunctive form (it may have a different index from $\Psi$).
We defined the challenge game for arbitrary $L_\mu$ formulas; however, when restricted to $\Pi^{\mu}_2$,
there is only one challenge. 
A binary state $\{\mathit{open},\mathit{met}\}$ and one counter suffice to represent the challenge configuration.

\begin{definition}
A position $(S,\phi,\kappa,r)$ of $\mathcal{F}(\beta)$ for $\beta\in \omega+1$ (natural numbers and $\omega$) is:
\begin{itemize}
 \item $S$ a set of {\em active states} consisting of pairs $(f,p)$ where $f\in \mathit{sf}(\Psi)$ and $p\in \{\mathit{open,met}\}$.
 \item $\phi\in \mathit{sf}(\bar \Psi)$;
 \item $\kappa:S\rightarrow (\beta+1)$ a function that assigns to each active state a counter value.
 \item $r\in \{0,1\}$ a sub-round number.
\end{itemize}
\end{definition}
The initial position is $(\{(\Psi,q+2)\},\bar\Psi,\kappa,0)$ where $\kappa(\Psi,q+2)= \beta $. Then, following \cite{skrzypczakdeciding}, we define multi-transitions.

\begin{definition}
A multi-transition from a position $(S,\phi,\kappa,r)$ to $(S',\phi',\kappa',r')$ consists of:
\begin{itemize}
 \item The pre- and post-states $(S,\phi,\kappa,r)$ and $(S',\phi',\kappa',r')$ where $r'=r+1\; \mathtt{ mod }\; 2$;
 \item a set $e$ of edges from the active states in $S$ to the active states in $S'$;
 \item a set $\bar e\subseteq e$ of boldfaced edges, where exactly one ends at each $(f,p)\in S'$
\end{itemize}
\end{definition}

The intention of the game $\mathcal F(n)$ is to let Even win iff there is a model for $\bar\Psi\wedge\Psi^n$.
The positions can be seen as attempts to build a branch of such a model, including witnesses for $\bar\Psi$ and tracking potential
opened challenges for $\Psi^n$. 
% The edges $e$ denote potential strategies for Odd in the challenge game, and the boldface edges finally ask Odd to decide the counter value for a given active state. 
% This way he needs to make a commitment of actually having challenged a certain number of times in the past where the number is chosen from potential strategies leading to that active state.
During round 0, the Odd player can restrict what challenges he may open.
% This is only relevant in $\mathcal{F}^\omega$;
% in $\mathcal{F}(n)$ for finite $n$, it is always in Odd's interest to allow challenges to be opened at any time -- see the winning conditions.
Then, in round 1, Even decides on the propositional variables true in the current state and a finite set of successor states.
She also extends her strategies on $\Psi$ and $\bar \Psi$ to those successors.
Odd then chooses a successor, which induces a new set of active states. Since the same active state may be reached from more than one predecessor state,
he also specifies boldfaced edges to each new active state.
The challenge-configuration is updated to reflect any challenges met or reset on the traces along boldfaced edges.\\

More formally, if the current configuration is $(S,\phi,\kappa,r)$, then the players construct a multi-transition to a new
configuration in the following ways:
\begin{itemize}
 \item (R0) $r=0$. Odd chooses a set $C$ of pairs $(f,\mathit{open})$ such that $(f,\mathit{met})\in S$, and
 $\kappa(f,p)>0$. 
%  This corresponds to Odd choosing what challenges he allows himself to open in the challenge game.
 The new active states are then $S'=S\cup C$ and Odd must specify with a bold-faced edge a predecessor $(f,p)\in S$ for each $(f,p')\in S'$. 
 This predecessor must satisfy $\kappa(f,p)>0$.
 For each such new state $(f,\mathit{open})$ with predecessor $(f,\mathit{met})$,
 set $\kappa'(f,p')=\kappa(f,p)-1$ if $\beta\not=\omega$; if $\beta=\omega$ then $\kappa$ is always the constant $\omega$.
%  This reflects counters of newly opened challenges
%  being decremented if there is a finite amount of challenges.
 The new configuration is $(S',\phi,\kappa',1)$.
 
 \item (R1) $r=1$. Even chooses:
 
 (i) a set of propositional variables $P$, 
 
 (ii) a set of successors $N_a = \{s_0,...,s_n\}$ for each action $s\in \mathit{Act}$ no larger than $|\Psi|+|\bar \Psi|$,
 
 (iii) a next Odd-choice formula $A_\phi \wedge \bigwedge_{a\in \mathit{Act}}  {\xrightarrow{a}} B_a$ of $\phi$ where $A_\phi$ respects $P$,
 
 (iv) a surjection $g_{{\xrightarrow{a}} B_a} : N_a \rightarrow B_a$ for each $a\in \mathit{Act}$,
 
 (v) a set $D$ consisting of a pair $(f',p')$ for every $(f,p)\in S$ where $f'$ is a next Odd-choice formula $f'=A_f \wedge \bigwedge_{a\in \mathit{Act}} {\xrightarrow{a}} B_a$
 such that $A_f$ respects $P$, and if the trace from $f$ to $f'$ sees $2$, then $p'=\mathit{met}$, otherwise $p'=p$. 
 
 (vi) for each chosen $(f',p')$, where $f'=A_f \wedge \bigwedge_{a\in \mathit{Act}} {\xrightarrow{a}} B_a$ a surjection $g_{(f',p'),a} : N_a \rightarrow B_a$ for each $a\in \mathit{Act}$.

 Odd responds by choosing $s'$ out of the successors. This induces a new set of active states: if $s'$ is an $a$-successor,
 the set $S'$ consisting of $(g_{(f,p),a}(s'),p')$ for each $(f,p)\in D$ such that if $2$ is seen
 on the trace from $f$ to $g_{(f,p),a}(s')$ then $p'=\mathit{met}$, else $p'=p$.
 The edges are $((f,p),(f'',p'')$ such that Even chooses $(f',p')$ at (v) from $(f,p)$, $g_{(f',p'),a}(s')=f''$, and $p''$ if $\mathit{met}$ or $p'$ accorsing to whether a $\nu$-bound variable is seen
 in between.
Finally, Odd also chooses for each $(f',p')\in S'$ a predecessor $(f,p)\in S$ to make the edge $((f,p),(f',p'))$ bold-faced.
The new configuration is then $(S',g_{{\xrightarrow{a}} B_a}(s'),\kappa',0)$.
 
\end{itemize}

A play is an infinite sequence of game configurations, linked by multi-transitions. These multi-transitions differ slightly from \cite{skrzypczakdeciding} to reflect the different context.
A play carries one $\bar \Psi$-trace and one or several $\Psi$ challenge traces, some of which are bold-faced.
Even wins a play if 
\begin{enumerate}[a)]
\item for every infinite $\Psi$-challenge trace Even meets every challenge opened by Odd, and
\item at least one of the following is true: 
\begin{enumerate}[I)]
\item on some boldfaced trace, infinitely many challenges are opened and met, or
\item the $\bar \Psi$ trace is winning for Even.
\end{enumerate}
\end{enumerate}

If Even wins $\mathcal{F}(n)$ with conditions a) and b II), this will give rise to a model of $\bar\Psi\wedge\Psi^n$. 
Conditions a) and b I) can only be satisfied in $\mathcal F^\omega$ and serve as tool to establish who will win $\mathcal{F}(n)$ for large n.

\begin{lemma}\label{FNandPsiN}
 For finite $n$, Even wins the $\mathcal{F}(n)$ game if and only if $\Psi\nLeftrightarrow \Psi^n$.
\end{lemma}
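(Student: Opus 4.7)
I would prove the two directions separately, each by converting strategies on one side into strategies on the other. First note that $\Psi \Rightarrow \Psi^n$ always holds: the winning condition of the $n$-challenge game is strictly weaker for Even than that of the parity game (she wins whenever the dominant priority is even, and additionally when it is odd but all opened challenges are met or reset). Hence $\Psi \not\Leftrightarrow \Psi^n$ is equivalent to the existence of a model of $\bar\Psi \wedge \Psi^n$, matching the stated intention of $\mathcal{F}(n)$.

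\textbf{($\Leftarrow$).} Given such a model $\mathcal{M}$, use Lemma \ref{below} to fix a well-behaved winning strategy $\tau$ for Even in $\mathcal{M}\times\bar\Psi$, and let $\sigma$ be Even's winning strategy in the $n$-challenge game on $\mathcal{M}\times\Psi$. Build Even's strategy in $\mathcal{F}(n)$ by maintaining the invariant that each game configuration $(S,\phi,\kappa,r)$ corresponds to a node $v$ of $\mathcal{M}$: $\phi$ is the subformula reached by $\tau$ at $v$, and $S$ enumerates the subformulas of $\Psi$ that Odd's various plays in the challenge game (against $\sigma$) can reach at $v$, each tagged with its challenge configuration. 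In every R1 round Even picks the propositional labels and successors of $v$ in $\mathcal{M}$, uses $\tau$ to produce the next Odd-choice formula $\phi'$ with its surjection, and uses $\sigma$ to produce one next Odd-choice formula and surjection for each $(f,p)\in S$; Odd's choice of successor moves the simulation to the next node of $\mathcal{M}$. Odd's R0 openings are interpreted as his challenging strategy in the $n$-challenge game. Condition (a) follows because $\sigma$ meets every challenge Odd opens, and (b)(II) holds because $\tau$ wins in $\mathcal{M}\times\bar\Psi$.

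\textbf{($\Rightarrow$).} Given a winning strategy $\rho$ for Even in $\mathcal{F}(n)$, unfold its strategy tree against all Odd responses to obtain a tree model $\mathcal{M}$ whose nodes are R1 game configurations, labelled by the propositions Even chose there and branching by Odd's successor choice. The $\phi$-component along any branch traces a disjunctive play for $\bar\Psi$ on $\mathcal{M}$; Even's choices of the next Odd-choice formula in $\rho$ induce her strategy in $\mathcal{M}\times\bar\Psi$, which is winning by condition (b)(II). The $S$-component enumerates all positions Odd's plays on $\Psi$ can occupy against the Even strategy extracted from $\rho$'s surjections $g_{(f',p'),a}$; condition (a) says every opened challenge is eventually met on every $\Psi$-trace, which is exactly the statement that Even wins the $n$-challenge game on $\mathcal{M}\times\Psi$, i.e.\ $\mathcal{M}\models\Psi^n$. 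Crucially, for finite $n$ restricted to $\Pi^\mu_2$ (where the only challenge priority is $2$ and no higher priority exists to reset $c_2$), each boldfaced trace sees at most $n$ opens, so condition (b)(I) can never be satisfied and we must be in case (b)(II).

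\textbf{Main obstacle.} The delicate step is the $(\Rightarrow)$ verification of $\mathcal{M}\models\Psi^n$: I have to show that the surjections chosen by $\rho$ at each $(f,p)\in S$ assemble coherently into a single Even strategy in the $n$-challenge game, and that this is so despite several bold-faced predecessors collapsing onto the same active state. The bold-faced edges are precisely what allow the extracted strategy to be well-defined and what carry the challenge-configuration forward so that condition (a), read on the constructed tree, transfers faithfully to the winning condition of the $n$-challenge game.
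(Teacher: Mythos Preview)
Your approach is essentially the paper's: convert strategies in both directions, using well-behaved strategies from disjunctive form for $(\Leftarrow)$ and reading the model off Even's $\mathcal{F}(n)$-strategy for $(\Rightarrow)$, with (b)(I) ruled out for finite $n$ because in $\Pi^\mu_2$ the $2$-counter is never reset.

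The one step you leave implicit in $(\Rightarrow)$ is how Odd's R0 choices are handled when building $\mathcal{M}$: the paper does not unfold against \emph{all} Odd responses but first \emph{fixes} Odd's R0 behaviour to open every challenge whose counter permits (and to pick bold-faced edges with maximal $\kappa$), so that $\mathcal{M}$ branches only on Odd's successor choice at R1 and the active-state set $S$ then genuinely enumerates every $(f,p)$ Odd could reach via any challenging strategy. This maximal R0 choice is precisely what makes Even's surjections at each $(f,p)\in S$ assemble into a single well-defined strategy in the $n$-challenge game, resolving exactly the coherence obstacle you flag at the end.
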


\begin{proof}
 First assume that Even wins $\mathcal{F}(n)$ for some finite $n$.
%  We build a structure $\mathcal{M}$ such that Odd wins $\mathcal{M}\times \Psi$ but Even wins the $n$-challenge game.
 We consider the following family of strategies for Odd in $\mathcal{F}(n)$: at round 0, he chooses $C$ to include every  pair  $(f,p')$ such that $(f,p)\in S$ and $\kappa(f,p)>0$.
 That is to say, he allows himself to set challenges whenever the counter values permit it.
 This means that in the structure we build, Even will have to have a strategy against all possible challenging strategies.
He also chooses a bold-faced edge inducing the largest $\kappa$.

 This defines Odd's strategy apart from the choice of direction. Such a partial strategy, combined with Even's winning strategy $\sigma$ in $\mathcal{F}(n)$ induces a structure $\mathcal{M}$.
 Even's winning strategy $\sigma$ in $\mathcal{F}(n)$ induces a strategy in the $n$-challenge game $\mathcal{M}\times \Psi$. 
%  The plays which agree with this strategy are the traces in plays of $\mathcal{F}(n)$ that agree with $\sigma$. 
From the winning condition a) of $\mathcal{F}(n)$, every play in $\mathcal{M}\times \Psi^n$
 which agrees with this strategy must be winning for Even. On the other hand, Odd's challenging strategy does not open infinitely many challenges
 on any boldfaced trace. Thus, b II) holds, and Even's strategy in $\mathcal{M}\times \bar \Psi$ induced by $\sigma$ must be winning. Odd therefore wins $\mathcal{M}\times \Psi$.
 
 For the other direction assume that there is a structure $\mathcal{M}$ such that  Odd wins $\mathcal{M}\times \Psi$ but Even wins the $n$-challenge game on the same arena with a strategy $\sigma$.
 Let $\mathcal{M}$ and $\sigma$ be such that for each Even's strategies $\sigma_\gamma$, for all challenging strategies $\gamma$,
 in  $\mathcal{M}\times \Psi$ the strategy
 only agrees with one play per branch; let the same be true for the winning strategy $\bar \sigma$ in $\mathcal{M}\times \bar \Psi$. This is possible due to both $\Psi$ and $\bar \Psi$ being in disjunctive form.
 
 Her strategy in $\mathcal{F}(n)$ is to build $\mathcal{M}$. At each game configuration, she keeps track of
 the state $v$ in $\mathcal{M}$ that it corresponds to. At $(S,\phi, \kappa, 1)$ she then plays:
 \begin{itemize}
  \item the set of propositional variables at $v$ and the sets  $N_a$ of $a$-successors of $v$;
  \item the set $D$ of pairs $(f',p')$ for each $(f,p)\in S$ such that $f'$ is the next Odd-choice formula of $f$ which 
her winning strategy plays at $v\times f$ if the current challenging configuration is $p$ with counter value $\kappa(f,p)$; $p'=p$ if no $2$ is seen along these steps, otherwise $p'=\mathit{met}$;

\item for each $(f, p)\in D$ and $a\in \mathit{Act}$, the surjection $g_{(f',p),a}$ which map each $a$-successor $v'$ to the unique immediate subformula $b$ of $f$
such that $\sigma$ plays $v'\times b$ from $v \times f$
when the challenge configuration is $p$ with counter $\kappa(f, p)$
 \item the next Odd-choice formula $\phi'$ her winning strategy $\bar\sigma$ in $\mathcal{M}\times \bar \Psi$ plays at $v\times \phi$;

   \item for the Odd-choice formula $\phi'$  of $\bar \Psi$, the surjections $g_{\phi,a}$ which map each $a$-successor $v'$ to the unique subformula $\phi''$
   such that $\bar\sigma$ plays $v'\times \phi''$ from $v \times \phi'$.
 \end{itemize}

The $\Psi$-traces in any play that agrees with this strategy correspond to plays agreeing with $\sigma$, which guarantees that the $\mathcal{F}(n)$-play
satisfies the winning condition (a). The $\bar \Psi$ trace corresponds to a play that agrees with $\bar \sigma$, satisfying the winning condition (b II).

\end{proof}

It then remains to be shown that Even wins $\mathcal{F}(\omega)$ if and only if she wins $\mathcal{F}(n)$ for all $n$ and that the winner of $\mathcal{F}(\omega)$ is decidable.

\begin{lemma}\label{boundingN}
  There is a finite value $K_0$, computable from $\Psi$ and $\bar \Psi$, such that 
  Even wins $\mathcal{F}(\omega)$ if and only if she wins $\mathcal{F}({K_0})$.
\end{lemma}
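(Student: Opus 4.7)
The plan is to exploit the fact that when $\beta = \omega$ the counters carry no information, so positions of $\mathcal{F}(\omega)$ reduce to triples $(S,\phi,r)$ with $S \subseteq \mathrm{sf}(\Psi) \times \{\mathit{open},\mathit{met}\}$, $\phi \in \mathrm{sf}(\bar\Psi)$ and $r \in \{0,1\}$, yielding an arena of size bounded by a computable $N$ depending only on $|\Psi|$ and $|\bar\Psi|$. The winning condition (a) together with (b I) $\vee$ (b II) is $\omega$-regular and can be encoded as a parity condition at the cost of a finite memory $M$, again computable from $\Psi$ and $\bar\Psi$. Thus $\mathcal{F}(\omega)$ is a finite, finite-memory determined game on an arena of size at most $N\cdot M$, and I would set $K_0 = N\cdot M$.

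For the direction Even wins $\mathcal{F}(\omega) \Rightarrow$ Even wins $\mathcal{F}(K_0)$, I would take a finite-memory winning strategy $\sigma$ for Even in $\mathcal{F}(\omega)$ and play it unmodified in $\mathcal{F}(K_0)$. The key observation is that Odd's admissible moves in $\mathcal{F}(K_0)$ are a restriction of his admissible moves in $\mathcal{F}(\omega)$: counters only constrain when Odd may open a challenge. Hence any play $\pi$ of $\sigma$ in $\mathcal{F}(K_0)$ lifts to a play of $\sigma$ in $\mathcal{F}(\omega)$ on which (a) and (b I) $\vee$ (b II) already hold. But on $\pi$ Odd has opened at most $K_0$ challenges on each boldfaced trail, so (b I) fails vacuously; therefore (b II) must hold, and $\sigma$ wins $\pi$. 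This direction in fact works for any finite $K_0 \geq 1$.

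For the converse I would argue contrapositively. Assume Odd has a finite-memory winning strategy $\tau$ in $\mathcal{F}(\omega)$; every play consistent with $\tau$ in the memory-augmented arena is eventually periodic, with acyclic prefix of length at most $N\cdot M$. If some consistent play violates (a), the witnessing opened-but-unmet challenge is set within this prefix, and Odd can replicate that single opening in $\mathcal{F}(K_0)$ provided $K_0 \geq N\cdot M$. Otherwise (a) holds while (b I) and (b II) both fail for Even; then any opening occurring inside the cycle would repeat infinitely often, forcing that opening to be met infinitely often (else (a) would already fail), which contradicts the failure of (b I) on the boldfaced trail that contains the cycle. Hence $\tau$ only opens challenges in the acyclic prefix, which is again covered by the counter bound $K_0$, so $\tau$ transports to a winning strategy in $\mathcal{F}(K_0)$.

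The main obstacle I expect is the cycle analysis in the converse direction: showing rigorously that any opening on the periodic part of a play consistent with $\tau$ forces either (a) to fail or (b I) to hold, while correctly tracking the movement of boldfaced edges through the product of game arena and finite memory. This is essentially a lift of the technical core of \cite{skrzypczakdeciding} from non-deterministic automata on binary trees to disjunctive $L_\mu$ on labelled transition systems; the extra care comes from the wider branching allowed by arbitrary $\mathit{Act}$-successors and from the simultaneous management of the $\Psi$-challenge traces and the $\bar\Psi$ trace, which is where most of the bookkeeping in the appendix concentrates.
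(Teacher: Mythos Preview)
Your overall plan---finite arena for $\mathcal{F}(\omega)$, $\omega$-regular winning condition, finite memory determinacy, then pigeonhole---matches the paper's. The easy direction (Even wins $\mathcal{F}(\omega)\Rightarrow$ Even wins $\mathcal{F}(K_0)$) is correct as you wrote it, and indeed holds for every finite $K_0$.

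The converse, however, has a real gap, and it is precisely the point you flag as the ``main obstacle''. Two issues:

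\emph{First}, the claim that every play consistent with Odd's finite-memory strategy $\tau$ is eventually periodic is false: Even's choices are unconstrained. What pigeonhole gives you is only that any play segment of length $>N\cdot M$ repeats a pair (position, memory); from two such repeats with an opening in between you can \emph{construct} a pumped play consistent with $\tau$. Your case split ``if some consistent play violates (a) / otherwise'' is therefore not a split over the play you started from, and the sentence about ``replicating that single opening'' does not establish that $\tau$ itself is a legal strategy in $\mathcal{F}(K_0)$.

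\emph{Second}, and more substantively: suppose $\tau$ opens more than $K_0$ challenges along some boldfaced trace, and you pump the repeated segment to obtain a play $\pi'$ on which (b\,I) holds. To contradict $\tau$ being winning you need Even to win $\pi'$, i.e.\ you also need condition (a) to hold on $\pi'$. But (a) is a requirement on \emph{every} infinite $\Psi$-trace, not only the boldfaced one you pumped. Nothing in your argument rules out that some \emph{other} trace in $\pi'$ carries an opened challenge that is never met; in that case (a) fails, Odd still wins $\pi'$, and there is no contradiction. Your sentence ``else (a) would already fail'' only covers the trace on which the pumped opening lives.

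The paper closes this gap with one extra idea you are missing: before taking a finite-memory strategy, it first replaces $\tau$ by a \emph{patient} strategy, one that only opens challenges at configurations where every active state is already in the $\mathit{met}$ status. This is w.l.o.g.\ for Odd in $\mathcal{F}(\omega)$. Patience guarantees that at each opening point all traces are met, so between the two repeated opening points every trace goes from $\mathit{met}$ back to $\mathit{met}$; hence on the pumped play condition (a) holds automatically, and together with (b\,I) this yields the desired contradiction. Without patience the pumping argument does not go through.
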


\begin{proof}
First, note that if Odd wins $\mathcal{F}({\omega})$, he can win with a strategy
which in round $0$ chooses $C$ to be the empty set whenever $S$ contains an active state $(f,\mathit{open})$.
In other words, Odd can always wait that all traces meet opened challenges to open a new challenge. Let such a strategy be called {\em patient}.
Note that $\mathcal{F}(\omega)$ is a finite game with a regular winning condition.
 Its winner therefore has a finite memory winning strategy. Suppose that Odd wins the game. Let $M$ be the size of
 the memory of Odd's patient winning finite memory strategy $\tau$ in $\mathcal{F}(\omega)$.
 Let $K_0$ be the product of $M$, the number of configurations of $\mathcal{F}(\omega)$ and the set of possible active states.
 
 We argue that $\tau$ is a winning strategy in $\mathcal{F}({K_0})$. First,
 we have to show that it is a valid strategy, i.e., Odd never tries to open a challenge with an empty counter.
 This could only occur if some bold-faced trace opened $K_0$ challenges.
  If that was the case then, $K_0$ being very large and Odd's memory being only $M$, there would be a
looping fragment 
 along this play  in which on a boldfaced trace a challenge is both opened and met. Following this boldfaced trace, a challenge would be open and met infinitely often.
 Furthermore, since $\tau$ is patient, winning condition $(a)$ would also hold, since Odd only opens challenges when
 all traces are in the $\mathit{met}$ state. This contradicts the assumption that $\tau$ is winning in $\mathcal{F}(\omega)$.
 
 We then argue that if Odd plays using $\tau$, this is a winning strategy for Odd in $\mathcal{F}({K_0})$. 
  Counting challenges does not affect the first winning condition whereby if Even is to win, in every infinite trace, Even must meet every challenge.
 So if a play that agrees with $\tau$ is winning for Odd in $\mathcal{F}(\omega)$ because on some trace Even fails to meet some challenge, then
 the same is true in $\mathcal{F}({K_0})$.
 Furthermore, as argued above, $\tau$ does not open more that $K_0$ challenges, so in no play does condition (b I) hold. Finally, condition (b II)
 is not affected by the addition of counters and inherits the winner from the $\mathcal{F}(\omega)$ game.
 As a result, $\tau$ is winning in $\mathcal{F}({K_0})$.
 
In the case that Even wins the  $\mathcal{F}(\omega)$ game, note that a winning strategy for Even in $\mathcal{F}(\omega)$ is a 
winning strategy in any $\mathcal{F}(n)$ for finite $n$. In particular, she would win $\mathcal{F}({K_0})$.
\end{proof}

\begin{corollary}\label{FomegaFn}
 Even wins $\mathcal{F}(\omega)$ if and only if she wins $\mathcal{F}(n)$ for all $n$.
\end{corollary}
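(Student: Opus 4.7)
The plan is to prove the two implications separately. The backward direction is essentially immediate from Lemma \ref{boundingN}, while the forward direction is a short monotonicity argument relating the two game variants.

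For the forward direction, I would start with a winning strategy $\sigma_\omega$ for Even in $\mathcal{F}(\omega)$ and argue that the same $\sigma_\omega$ wins $\mathcal{F}(n)$ for every finite $n$. The key observation is that Odd's legal moves in $\mathcal{F}(n)$ form a subset of his legal moves in $\mathcal{F}(\omega)$: the requirement $\kappa(f,p)>0$ at the challenge-opening step of Round 0 is a genuine constraint only when counters are finite, and vacuous once $\kappa \equiv \omega$. Consequently, any play $\pi$ of $\mathcal{F}(n)$ consistent with $\sigma_\omega$ is simultaneously a play of $\mathcal{F}(\omega)$ consistent with $\sigma_\omega$. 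Since $\sigma_\omega$ wins $\mathcal{F}(\omega)$, the play $\pi$ satisfies condition (a) together with (b I) or (b II), and because these conditions are intrinsic properties of the sequence of configurations and edges rather than of the game in which the play is generated, the same play is a winning play for Even when viewed in $\mathcal{F}(n)$. Thus $\sigma_\omega$ is a winning strategy in $\mathcal{F}(n)$ for every finite $n$.

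For the backward direction, the plan is simply to invoke Lemma \ref{boundingN}. If Even wins $\mathcal{F}(n)$ for all finite $n$, then in particular she wins $\mathcal{F}(K_0)$ for the computable bound $K_0$ provided by that lemma; by the equivalence established there, she also wins $\mathcal{F}(\omega)$.

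I do not expect any serious obstacle: the corollary is essentially a rearrangement of Lemma \ref{boundingN} combined with the elementary observation that restricting Odd (by adding finite counters) can only help Even. The only point of care is to verify that the winning conditions (a), (b I), (b II) are formulated as properties of the resulting play rather than of the ambient game, so that a play common to both $\mathcal{F}(\omega)$ and $\mathcal{F}(n)$ has the same winner in both; this is immediate from the definitions given just before Lemma \ref{FNandPsiN}.
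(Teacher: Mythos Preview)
Your proposal is correct and matches the paper's approach: the forward direction is exactly the observation made in the last paragraph of the proof of Lemma~\ref{boundingN} (a winning strategy for Even in $\mathcal{F}(\omega)$ is a winning strategy in every $\mathcal{F}(n)$, since finite counters only restrict Odd), and the backward direction is the immediate application of Lemma~\ref{boundingN} via $K_0$. The only cosmetic point is that a play of $\mathcal{F}(n)$ is not literally a play of $\mathcal{F}(\omega)$ because the $\kappa$-components differ; what you mean is that its projection obtained by replacing all counter values by $\omega$ is such a play, and since Even's moves and the winning conditions do not depend on $\kappa$, this projection preserves both consistency with $\sigma_\omega$ and the winner.
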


\setcounter{theorem}{2}
\begin{theorem}
 Let $\Psi\in \Pi^\mu_2$ be disjunctive. Then there is a constant $K_0$ computable from $\Psi$ such that the following statements are equivalent:
\begin{enumerate}[a)]
 \item{There is some $m$ such that $\Psi\Leftrightarrow\Psi^m$.}
 \item{$\Psi\Leftrightarrow\Psi^{K_0}$}
 \end{enumerate}
\end{theorem}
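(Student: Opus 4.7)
My plan is to factor the proof through an auxiliary tree-building game $\mathcal{F}(\beta)$, parametrised by a counter bound $\beta\in\omega+1$, adapting the game of Skrzypczak--Walukiewicz from non-deterministic tree automata on binary trees to disjunctive $L_\mu$ on arbitrary labelled transition systems. Positions will carry a set of active $\Psi$-subformulas, each tagged \emph{open}/\emph{met} with a counter bounded by $\beta$, together with one subformula of the disjunctive negation $\bar\Psi$. In each multi-transition Even will choose a next Odd-choice disjunct and a surjection onto a proposed set of successors, uniformly for every active state and for the $\bar\Psi$-formula; Odd will pick a direction and boldface one predecessor per new active state. Even wins if (a) every infinite challenge trace eventually meets every opened challenge, and (b) either some boldfaced trace opens and meets challenges infinitely often, or the $\bar\Psi$-trace is winning. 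The design is tuned so that an Even win at finite $n$ produces a structure in which $\bar\Psi$ holds while Even still wins the $n$-challenge game on $\Psi$, i.e.\ a witness for $\Psi\nLeftrightarrow\Psi^n$.

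The first lemma I will prove is that for finite $n$, Even wins $\mathcal{F}(n)$ iff $\Psi\nLeftrightarrow\Psi^n$. Forward, I will unfold Even's winning strategy against Odd's ``open every available challenge'' strategy and read off a structure $\mathcal{M}$: condition (a) hands Even the $n$-challenge game on $\mathcal{M}\times\Psi$, while condition (b II) hands Odd the parity game on $\mathcal{M}\times\bar\Psi$. Backward, given such a witness $\mathcal{M}$, Even simulates $\mathcal{M}$ inside $\mathcal{F}(n)$, appealing to Lemma~\ref{below} on both the $\Psi$- and $\bar\Psi$-strategies so that each has one play per branch, which is what allows the required surjections to be read off unambiguously.

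The second step is where $K_0$ arises. Since $\mathcal{F}(\omega)$ is a finite regular-winning game, its winner has a finite-memory strategy; if Odd wins I will restrict attention to a \emph{patient} winning $\tau$ of memory $M$ that opens a new challenge only once every active state is back in \emph{met}. I then set $K_0$ to be the product of $M$, the number of game configurations, and the number of possible active-state sets. The main obstacle is the pumping argument that $\tau$ never exhausts a counter of size $K_0$: if some counter were ever driven to zero, a pigeonhole along a boldfaced trace would yield a segment between two openings of the same challenge on which both the configuration and the memory state repeat, producing an infinite periodic play on which a boldfaced trace opens and meets challenges infinitely often. Patience of $\tau$ then forces winning condition (a) to hold globally on this play, so Even satisfies both (a) and (b I), contradicting that $\tau$ is winning in $\mathcal{F}(\omega)$. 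Conversely any Even winning strategy in $\mathcal{F}(\omega)$ is automatically winning in $\mathcal{F}(K_0)$, so the winners of the two games coincide.

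Putting the pieces together, the theorem follows at once: (a) says Even loses some $\mathcal{F}(m)$; by the bounding step Even also loses $\mathcal{F}(\omega)$ and hence $\mathcal{F}(K_0)$; by the first equivalence this is exactly (b), while the converse is trivial. Since $K_0$ is read off a finite-memory winning strategy of a finite game whose position set is bounded in terms of $|\Psi|+|\bar\Psi|$, it is effectively computable from $\Psi$.
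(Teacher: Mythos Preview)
Your proposal is correct and follows essentially the same route as the paper: the same $\mathcal{F}(\beta)$ game with the same positions, multi-transitions, boldfaced edges and winning conditions; the same equivalence Lemma for finite $n$ (using Odd's ``open everything'' strategy forward and well-behavedness of both $\Psi$- and $\bar\Psi$-strategies backward); the same patient finite-memory pumping argument to define $K_0$ and align the winners of $\mathcal{F}(\omega)$ and $\mathcal{F}(K_0)$. The only cosmetic point is that in your final paragraph the step ``Even loses some $\mathcal{F}(m)$ $\Rightarrow$ Even loses $\mathcal{F}(\omega)$'' is not literally your bounding step for $K_0$ but the observation (which you state only for $K_0$) that an Even winning strategy in $\mathcal{F}(\omega)$ is automatically winning in every $\mathcal{F}(n)$; this is immediate and the paper uses it in exactly the same way.
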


\begin{proof}
We only need to show that the first statement implies the second one.
 Let $\Psi$  be a disjunctive formula in $\Pi^{\mu}_2$ such that the first statement holds true.  Lemma \ref{FNandPsiN} yields that $\Psi\nLeftrightarrow \Psi^n$ if and only if Even wins $\mathcal{F}(n)$.  From Lemma \ref{boundingN} and Corollary \ref{FomegaFn}, this is true for all n if and only if she wins $\mathcal{F}({K_0})$, where $K_0$ depends only on $\Psi$. However, the first statement implies that Even loses $\mathcal{F}({K_0})$.
 Thus, Odd wins $\mathcal{F}({K_0})$, implying $\Psi \Leftrightarrow \Psi^{K_0}$ by Lemma \ref{FNandPsiN}.
\end{proof}

\end{document}